\documentclass[a4paper]{article}

\usepackage[margin=1.38in,top=1.1in]{geometry}

\usepackage[T1]{fontenc}
\usepackage[latin9]{inputenc}
\usepackage{amsmath, dsfont}
\usepackage{amsmath}
\usepackage{amssymb}
\usepackage{amsthm}
\usepackage{wasysym}
\usepackage{graphicx}
\usepackage{thmtools}
\usepackage{thm-restate}
\usepackage{hyperref}

\usepackage{enumerate}
\usepackage{verbatim}

\pagestyle{myheadings}
\markboth{}{T. Miltzow, L. Narins, Y. Okamoto, G. Rote, A. Thomas, T. Uno: Token Swapping}

\newtheorem{lemma}{Lemma}

\newtheorem{corollary}[lemma]{Corollary}

\newtheorem{theorem}[lemma]{Theorem}

\newtheorem{open}[lemma]{Open Question}

\usepackage{thmtools}




\graphicspath{{figures/}}


\begin{document}
\thispagestyle{empty}

\begin{center}
\huge 
Approximation and Hardness for \\ Token Swapping
\end{center}

\vspace{0.5cm}

\begin{minipage}[c]{0.3\textwidth}%
  \begin{center}
  \textsc{Tillmann Miltzow}\\ [0.1cm]
  {\small Freie Universit\"at Berlin \\ }
  {\small Berlin, Germany \\[0.4cm] }
  \par\end{center}%
\end{minipage}
\begin{minipage}[c]{0.3\textwidth}%
  \begin{center}
  \textsc{Lothar Narins}\\ [0.1cm]
  {\small Freie Universit\"at Berlin \\ }
  {\small Berlin, Germany \\[0.4cm] }
  \par\end{center}%
\end{minipage}
\begin{minipage}[c]{0.3\textwidth}%
  \begin{center}
  \textsc{Yoshio Okamoto}\\ [0.1cm]
  {\small University of Electro-Communications \\ }
  {\small Tokyo, JAPAN \\[0.4cm] }
  \par\end{center}%
\end{minipage}

\medskip

\begin{minipage}[c]{0.3\textwidth}%
  \begin{center}
  \textsc{G\"unter Rote}\\ [0.1cm]
  {\small Freie Universit\"at Berlin \\ }
  {\small Berlin, Germany \\[0.4cm] }
  \par\end{center}%
\end{minipage}
\begin{minipage}[c]{0.3\textwidth}%
  \begin{center}
  \textsc{Antonis Thomas}\\ [0.1cm]
  {\small  Department of Computer Science, ETH Z\"{u}rich \\ }
  {\small Switzerland \\[0.4cm] }
  \par\end{center}%
\end{minipage}
\begin{minipage}[c]{0.3\textwidth}%
  \begin{center}
  \textsc{Takeaki Uno}\\ [0.1cm]
  {\small  National Institute of Informatics \\ }
  {\small Tokyo, JAPAN \\[0.4cm] }
  \par\end{center}%
\end{minipage}

\vspace{0.5cm}

\begin{figure}[htbp]
  \centering\includegraphics[width = 0.8\textwidth]{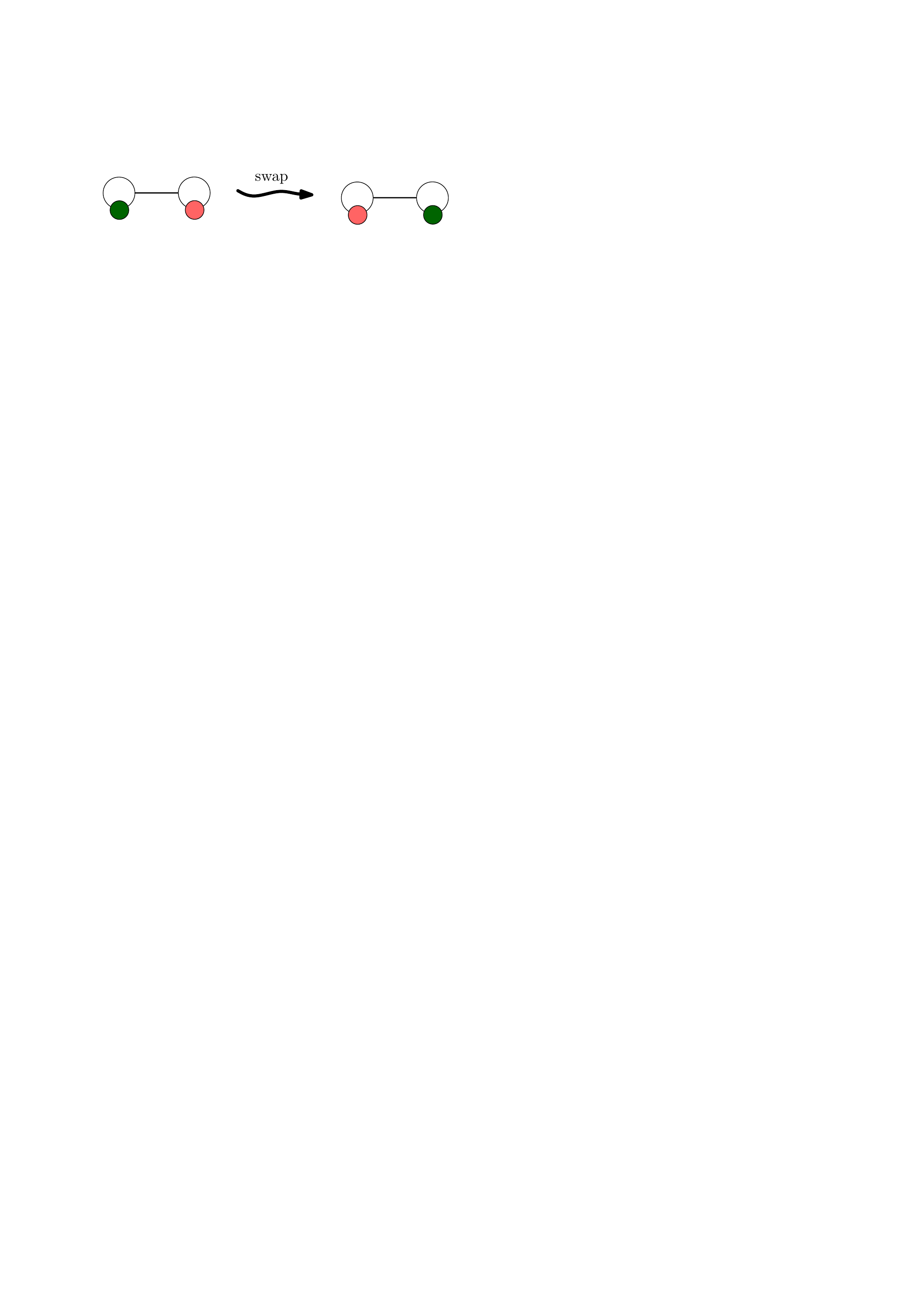}
  
  Two tokens on adjacent vertices swap
\end{figure}

\vspace{0.2cm}

\begin{abstract}

Given a graph $G=(V,E)$ with $V=\{1,\ldots,n\}$, we
place on every vertex a token $T_1,\ldots,T_n$. A swap is an exchange of
tokens on adjacent vertices. We consider the algorithmic question of
finding a shortest sequence of swaps such that token $T_i$ is on 
vertex $i$.
We are able to achieve essentially matching upper and lower bounds, for exact algorithms and approximation algorithms. 
For exact algorithms, we rule out any $2^{o(n)}$ algorithm under the ETH. This is matched with a simple $2^{O(n\log n)}$ algorithm based on a breadth-first search in an auxiliary graph.
We show one general $4$-approximation and show APX-hardness.
Thus, there is a small constant $\delta>1$ such that every polynomial time approximation algorithm has approximation factor at least $\delta$.

Our results also hold for a generalized version, where tokens 
and vertices are colored. 
In this generalized version each token 
must go to a vertex with the same color.
\end{abstract}

{\let\thefootnote\relax
\footnotetext
{A condensed version of this report will be presented at
the
 24th Annual European Symposium on Algorithms in Aarhus, Denmark,
in August 2016~\cite{mnortu-ahts-16}.}
}




\section{Introduction}

In the theory of computation, we regularly encounter the following type of 
problem:
Given two configurations, we wish to transform one to the other.
In these problems we also need to fix a family of operations that we 
are allowed to perform.
Then, we need to solve two problems: (1) Determine if one can be
transformed to the other; (2) If so, find a shortest sequence of 
such operations.
Motivations come from the better understanding of solution spaces,
which is beneficial for design of local-search algorithms, enumeration,
and probabilistic analysis.
See \cite{GKMP09} for examples.
The study of \emph{combinatorial
reconfigurations} is a young growing field  \cite{CoRe2015}.

Among problem variants in combinatorial reconfiguration,
we study the \emph{token swapping problem} on a graph.
The problem is defined as follows.
We are given an undirected connected graph with $n$ vertices $v_1,\dots,v_n$.
Each vertex $v_i$ holds exactly one token $T_{\pi(i)}$, where $\pi$ is a
permutation of $\{1,\dots,n\}$.
In one step, we are allowed to swap tokens on a pair of adjacent vertices, 
that is, if $v_i$ and $v_j$ are adjacent, $v_i$ holds the token $T$, and
$v_j$ holds the token $T'$, then the swap between $v_i$ and $v_j$ results
in the configuration where $v_i$ holds $T'$, $v_j$ holds $T$, and
all the other tokens stay in place.
Our objective is to determine the minimum number of swaps so that every
vertex $v_i$ holds the token $T_i$.
It is known (and easy to observe) that such a sequence of swaps 
always exists \cite{Yamanaka201581}.

We also study a generalized problem called the \emph{colored token swapping problem}.  Here every token and every vertex is given a color and 
we want to find a minimum sequence of swaps such that every token arrives on 
a vertex with the same color. In case that every token and every vertex is given a unique color, this is equivalent to the uncolored version.
%
This problem was introduced
in its full generality by
Yamanaka, E.~Demaine, Ito, Kawahara, Kiyomi, Okamoto, Saitoh, Suzuki,
Uchizawa, and Uno
~\cite{Yamanaka201581},
but special cases had been studied before.
When the graph is a path, the problem is equivalent to counting the number
of adjacent swaps in bubble sort, and it is folklore that this is exactly 
the number of inversions of the permutation $\pi$ (see Knuth \cite[Section 5.2.2]{KnuthTAOCP3}).
When the graph is complete, it was already known by Cayley \cite{cayley} that the minimum
number is equal to $n$ minus the number of cycles in $\pi$ (see also \cite{Jerrum1985265}).
Note that the number of inversions and the number of cycles can be computed in
$O(n \log n)$ time.
Thus, the minimum number of swaps can be computed in $O(n\log n)$ time
for paths and complete graphs.
Jerrum~\cite{Jerrum1985265} gave an $O(n^2)$-time algorithm to solve the 
problem for cycles.
When the graph is a star, a result of Pak~\cite{Pak1999329} implies 
an $O(n\log n)$-time algorithm.
Exact polynomial-time algorithms are also known for complete bipartite
graphs \cite{Yamanaka201581} and complete split graphs~\cite{Yasui-SIGAL}.
Polynomial-time approximation algorithms are known for 
trees with approximation factor two \cite{Yamanaka201581} and for squares of
paths with factor two~\cite{HeathV03}.
Since Yamanaka et al.\ \cite{Yamanaka201581}, 
it has remained open whether the problem is polynomial-time solvable
or NP-complete, even for general graphs, and whether there exists
a constant-factor polynomial-time approximation algorithm for general graphs.

Bonnet, Miltzow, and Rz\k{a}\.zewski~\cite{dblpTOKEN2}
have recently strengthened our
 results: 
 Deciding whether
 a token swapping problem on a graph with $n$ vertices has a solution 
 with $k$~swaps
 is W[1]-hard with respect to the parameter~$k$, and it cannot be solved in
 $n^{o(k/\log k)}$ time unless the ETH fails.
In addition, they show NP-hardness on graphs with treewidth at most~$2$.
Finally, they complement their lower bounds by showing that token swapping (and some variants of it) are FPT on nowhere dense graphs. 
This includes planar graphs and graphs of bounded
treewidth (still with the number $k$ of swaps as the parameter).

\paragraph*{Our results.}
In this paper, we present upper and lower bounds on the algorithmic 
complexity of the token swapping problem.
For exact computation, we derive a straightforward super-exponential time 
algorithm in Section~\ref{sec:SimpleExact}.
We also analyze the algorithm under the parameter $k$ and $\Delta$, where $k$ is 
the number of required swaps and $\Delta$ is the maximum degree of the underlying graph.

\begin{restatable}[Simple Exact Algorithm]{theorem}{SimpleAlgo}\label{thm:SimpleAlgo}
Consider the token swapping problem on a graph $G
$ with $n$ vertices and $m$ edges. Denote by $k$ the optimal
 number of required swaps and by $\Delta$ the maximum degree of $G$. 
An optimal sequence can be found within the following running times:
 \begin{enumerate}
  \item $O(n!\, mn\log n)=2^{O(n\log n)}$
  \item $O(m^{k}n\log n)=O(n^{2k+1}\log n)$
  \item $(2k\Delta)^{k}\cdot O(n\log n)$
 \end{enumerate}
The space requirement of these algorithms is
equal to the runtime limit divided by~$n\log n$.
\end{restatable}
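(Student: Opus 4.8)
The plan is to realise all three running times through one breadth-first search (BFS) in the \emph{configuration graph} $\mathcal{C}$, whose vertices are the $n!$ configurations (bijections assigning a token to each vertex of $G$) and in which two configurations are joined by an edge exactly when one is obtained from the other by swapping the tokens across some edge of $G$. Every configuration then has exactly $m$ neighbours, and a shortest swap sequence is precisely a shortest path in $\mathcal{C}$ from the start configuration to the sorted one. For bound (1), I would run BFS from the start configuration, keeping the set of already-reached configurations in a dictionary. Processing one configuration means generating all $m$ neighbours (each via a single transposition) and, for each, doing one dictionary lookup/insertion; encoding a configuration and performing this dictionary operation costs $O(n\log n)$. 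Since the BFS touches at most all $n!$ configurations and explores $m$ edges from each, the total time is $O(n!\,mn\log n)$, and as $n! = 2^{O(n\log n)}$ while $m$ and $n\log n$ are $2^{O(\log n)}$, this is $2^{O(n\log n)}$.

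For bound (2) I would truncate the same BFS at the depth $k$ at which the sorted configuration first appears; no advance knowledge of $k$ is needed, because BFS discovers it. Since each configuration has at most $m$ neighbours, the number of configurations within distance $k$ is at most $\sum_{i=0}^{k} m^i = O(m^k)$, and by the standard BFS accounting (charging the cost of creating a child to that child) the total time is $O(m^k\, n\log n)$. Substituting $m \le \binom{n}{2} = O(n^2)$ yields $O(n^{2k+1}\log n)$.

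Bound (3) is obtained by pruning the branching: at each configuration I explore only swaps on an edge incident to a \emph{misplaced} token, i.e.\ a token not currently on its target vertex. The counting is then easy. Writing $\Phi$ for the total displacement $\sum_i \mathrm{dist}(\text{position of }T_i,\,i)$, each swap changes $\Phi$ by at most $2$, so a configuration still needing $r$ swaps satisfies $r \ge \Phi/2 \ge \#\{\text{misplaced tokens}\}/2$; hence at a search node of depth $j$ there are at most $2(k-j) \le 2k$ misplaced tokens. Each such token lies on a vertex of degree at most $\Delta$ and so on at most $\Delta$ swappable edges, giving at most $2k\Delta$ candidate swaps per configuration, at most $(2k\Delta)^k$ explored configurations, and total time $(2k\Delta)^k\cdot O(n\log n)$. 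The space bound follows uniformly: the memory is dominated by the table of reached configurations, one entry per processed configuration, and each entry is handled in $\Theta(n\log n)$ time, so the number of entries — and hence the space — is the running time divided by $n\log n$.

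The main obstacle is justifying that the pruning in bound (3) is lossless, namely that some shortest swap sequence uses only swaps touching a misplaced token; by induction it suffices to show that from every non-sorted configuration some \emph{optimal} next swap is incident to a misplaced token. I would prove this by an exchange argument. Given a shortest continuation whose first swap exchanges two already-correct tokens across an edge $(a,b)$, this swap merely introduces a transposition of the contents of $a$ and $b$; this spurious transposition commutes forward past every subsequent swap that avoids $a$ and $b$, and is absorbed at the first later swap meeting $a$ or $b$, producing an equally short continuation whose first swap touches a misplaced token. Iterating this over the sequence yields the desired optimal sequence, confirming that the pruned BFS still attains the optimum $k$.
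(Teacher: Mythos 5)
Your proposal follows essentially the same route as the paper: breadth-first search in the $n!$-node configuration graph with an $O(n\log n)$ dictionary operation per step for bound (1), truncation of that search at depth $k$ for bound (2), and for bound (3) the restriction to swaps incident to a misplaced token combined with the observation that a configuration solvable in $k$ swaps has at most $2k$ misplaced tokens, each offering at most $\Delta$ candidate swaps. The only divergence is in the justification that this restriction is lossless: the paper takes an optimal sequence and moves a swap of two correctly placed tokens to the very end, whereas you commute the spurious transposition forward until it meets a later swap on $a$ or $b$ --- both are informal exchange arguments of the same flavor (and both leave the ``absorption''/``validity'' step at a comparable level of hand-waving).
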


With more problem specific insights, we could derive polynomial 
time constant factor approximation algorithms, see Section~\ref{sec:Approx1}. 
This resolves the first open problem
by Yamanaka et al.\ \cite{Yamanaka201581}.
We describe the algorithm for the uncolored version in Section~\ref{sec:Approx1}. 
In Section~\ref{relaxed}, we show a general 
technique to get also approximation algorithms for the more general colored version.
The approximation algorithm we suggest makes deep use of the 
structure of the problem and is nice to present. 

\begin{restatable}[Approximation Algorithm]{theorem}{Approximation}\label{thm:Approximation}
	There is a $4$-approximation of the colored token swapping problem on general graphs and 
  a $2$-approximation on trees.
\end{restatable}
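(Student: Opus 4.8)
The plan is to sandwich the optimum between a routing lower bound and a constructive swapping upper bound, and to arrange the constants so that the two bounds differ by a factor of $4$ in general and $2$ on trees. For the lower bound I would use the potential $\Phi=\sum_T d(T)$, where $d(T)$ is the graph distance from the current vertex of token $T$ to the nearest vertex whose color matches $T$ (in the uncolored case, to the unique target vertex $i$ of $T_i$). A single swap moves exactly two tokens, each across one edge, so it changes each of their distances-to-nearest-target by at most $1$; hence one swap decreases $\Phi$ by at most $2$. Since every valid final configuration has $\Phi=0$, this gives $\mathrm{OPT}\ge \Phi/2$. This bound holds verbatim for the colored version, which is convenient because it does not require committing in advance to any assignment of tokens to same-colored vertices.

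For the upper bound I would analyze a greedy strategy built from two move types. Orient every misplaced token along the first edge of a shortest path to its nearest target, obtaining a desired-direction digraph $D$ on the graph's edges. A directed cycle $u_1\to u_2\to\cdots\to u_\ell\to u_1$ of $D$ is a cycle of length $\ell$ in $G$ along which every token wants to advance; rotating the tokens one step around it can be realized in $\ell-1$ adjacent swaps and decreases $\Phi$ by exactly $\ell$, since each of the $\ell$ tokens ends one step closer to its own target. A $2$-cycle is just an ordinary happy swap. As $\ell-1<\ell$, every rotation is efficient, and because unhappy swaps (below) leave $\Phi$ unchanged, the whole decrease of $\Phi$ is produced by rotations; summing, the total number of rotation swaps is at most $\Phi$. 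On a tree the only cycles of $D$ are $2$-cycles, so there every efficient move is a single happy swap and the rotation swaps number exactly $\Phi/2$.

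The difficulty, and the real content of the proof, is that $D$ need not contain any directed cycle: following the pointers can always terminate at a vertex carrying an already-correct token (the path $a$--$b$--$c$ with the two endpoint tokens exchanged is the smallest example). Making progress then forces an \emph{unhappy} swap that pushes a correctly placed token off its vertex while advancing the routed token, leaving $\Phi$ unchanged; termination of the algorithm must therefore be argued with a secondary, lexicographically smaller measure. The crux is to bound the number $U$ of unhappy swaps: I would charge each unhappy swap to the advance it enables and aim to show $U\le\Phi$ in general, giving at most $2\Phi$ swaps in total and hence a $4$-approximation, while on a tree the acyclic structure should let one peel leaves and charge each unhappy swap against a distinct happy swap to obtain $U\le\Phi/2$, giving at most $\Phi$ swaps and hence a $2$-approximation. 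I expect this accounting of the displacement swaps to be the main obstacle, and the gap between the generic bound $U\le\Phi$ and the tree bound $U\le\Phi/2$ to be exactly what separates the factor $4$ from the factor $2$.

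Finally, for the colored version I would reduce to the uncolored analysis by first fixing, once and for all, an assignment of each token to a distinct same-colored vertex --- the relaxation step of Section~\ref{relaxed} --- chosen so that the resulting per-token distances stay within a constant factor of the nearest-target distances used in the lower bound. Running the strategy above against this fixed assignment then inherits the rotation and displacement guarantees up to that constant, which is absorbed into the stated factors. The honest soft spot in this step is controlling the distance blow-up caused by forcing distinct targets when several tokens share a nearest same-colored vertex, and I would treat the design of that assignment, rather than the swap routing itself, as the delicate part of the colored case.
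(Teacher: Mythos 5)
Your overall architecture matches the paper's: the distance-sum lower bound $\mathrm{OPT}\ge\Phi/2$, a greedy algorithm alternating cyclic rotations (the paper's ``happy swap chains'') with unhappy swaps, and the count $\#(\text{happy swaps})<\Phi$ from the fact that a rotation of $\ell$ tokens costs $\ell-1$ swaps and decreases $\Phi$ by $\ell$. But the step you yourself flag as ``the crux'' --- bounding the number $U$ of unhappy swaps --- is exactly the content of the proof and is left unproved. The paper's resolution is short but essential: after an unhappy swap along an edge $e=\{i,j\}$ displaces the token $T_i$ from its target $i$ to $j$, the only edge along which $T_i$ can decrease its distance is $e$ itself, and $e$ cannot host another unhappy swap because neither of its endpoints now carries a correctly placed token; hence the next swap involving $T_i$ must be part of a happy swap chain. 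This yields $U\le\#(\text{happy swaps})$ directly, giving the total bound $2\Phi\le 4\,\mathrm{OPT}$ in general and, since on trees every directed cycle of the pointer digraph has length $2$ so that $\#(\text{happy swaps})=\Phi/2$, the bound $\Phi\le 2\,\mathrm{OPT}$ on trees. Without this charging argument (or an equivalent one) neither constant is established, and your proposed targets $U\le\Phi$ and $U\le\Phi/2$ are precisely the claims that need proof.

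The colored case in your proposal is set up in a way that cannot be repaired as stated. You define the lower-bound potential via the distance to the \emph{nearest} same-colored vertex, then want to choose a system of distinct targets whose distances stay ``within a constant factor'' of those nearest-target distances. No such constant exists: if many tokens of one color are clustered near a single vertex of that color while the remaining vertices of that color are far away, the sum of assignment distances exceeds the sum of nearest-target distances by an unbounded factor (and the optimum is correspondingly large, so your $\Phi/2$ lower bound becomes uselessly weak). The paper avoids this entirely by changing the lower bound rather than the assignment: it computes, per color class, a minimum-cost bipartite matching between tokens and vertices with cost equal to graph distance, of total value $L^*$, and observes that \emph{any} feasible swap sequence realizes \emph{some} color-respecting assignment $\sigma$, whose distance sum is at least $L^*$ by minimality and at most twice the number of swaps. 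Hence $\mathrm{OPT}\ge L^*/2$, and running the uncolored algorithm against the matching $\pi$ uses at most $2L^*$ swaps. You should replace the ``constant-factor comparison with nearest targets'' by this optimality-of-the-matching argument.
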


Our main aim was to complement these algorithmic findings with corresponding conditional lower bounds. 
For this purpose, we design a gadget called
an \emph{even permutation network}\footnote{Not to be confused with a 
sorting network.}, in Section~\ref{sec:permutation}. This can be regarded as a special class of instances of the token 
swapping problem. It allows us to achieve \emph{any}
even permutation of the tokens between dedicated input and output vertices.
Using even permutation networks allows us to reduce further from
the colored token swapping problem to the more specific uncolored version.
Even permutation networks together with the NP-hardness proof for the \emph{colored} 
token swapping problem implies NP-hardness of the uncolored token swapping problem.
Yamanaka et al.\ \cite{Yamakana-2015} already proved that the
colored token swapping problem is NP-complete.
This gives a comparably simple way 
to answer the open question by Yamanaka et al.\ \cite{Yamakana-2015}.
However, this reduction has some polynomial blow up and thus does not give tight lower bounds.
We believe that even permutation networks are of general interest and 
we hope that similar 
gadgets will find applications in other situations.

Our reductions are quite technical and, so, we try to modularize 
them as much as possible. In the process, we show hardness
for a number of intermediate problems.
In Section~\ref{sec:disjoint}, we show a \emph{linear} reduction from $3$SAT to a problem of finding disjoint paths in a structured graph. (A precise definition can be found in that section.)
Section~\ref{sec:RedCloredTokens} shows how to reduce further to the colored token swapping problem.
Section~\ref{sec:ReductionTSP} is committed to the reduction from the colored to the uncolored token swapping problem. For this purpose, we attach 
an even permutation network to each color class. 
The blow up remains linear  as each color class has \emph{constant} size.
Section~\ref{sec:Finish} finally puts the 
three previous reductions together in order to attain the main results. 
For our lower bounds we need to use the exponential time hypothesis (ETH), 
which essentially states that there is no $2^{o(N)}$ algorithm for $3$SAT, 
where $N$ denotes the number of variables.

\begin{restatable}[Lower Bounds]{theorem}{LowerBounds}\label{thm:LowerBound}
	The token swapping problem has the following properties:
	\begin{enumerate}
	  \item\label{itm:NP} It is NP-complete.
	  \item\label{itm:ETH} It cannot be solved in time $2^{o(n)}$ unless the Exponential Time Hypothesis fails, where $n$ is the number of vertices.
	  \item\label{itm:Approx} It is APX-hard. 
	\end{enumerate}
	These properties also hold, when we restrict ourselves to instances of bounded degree. 
\end{restatable}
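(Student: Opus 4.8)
The plan is to establish all three properties through a single chain of reductions, engineered so that each link is simultaneously \emph{linear} in size, \emph{gap-preserving}, and \emph{degree-bounded}; these three features are exactly what yield the ETH bound, APX-hardness, and the bounded-degree claim, while plain NP-hardness falls out of hardness of the starting problem. Membership in NP is immediate, since any optimal swap sequence has length at most $O(n^2)$ (sorting along a spanning tree already achieves this), so the sequence itself is a polynomial certificate. The chain I would use is
$$3\text{SAT} \;\longrightarrow\; \text{structured disjoint paths} \;\longrightarrow\; \text{colored token swapping} \;\longrightarrow\; \text{uncolored token swapping},$$
corresponding to Sections \ref{sec:disjoint}, \ref{sec:RedCloredTokens}, and \ref{sec:ReductionTSP}.

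The technical heart is the standard \emph{distance lower bound}: a single swap moves two tokens and can decrease the total remaining travel $\sum_i d\bigl(i,\pi(i)\bigr)$ by at most $2$, so every solution for the initial permutation $\pi$ uses at least $\tfrac12 \sum_{i} d\bigl(i,\pi(i)\bigr)$ swaps, where $d$ is the graph distance. Equality is approached exactly when all tokens can be routed along shortest paths that do not conflict. I would exploit this in the reduction from structured disjoint paths to the colored problem (Section \ref{sec:RedCloredTokens}): a satisfying assignment corresponds to a family of disjoint routing paths along which all tokens make simultaneous progress, so the optimum meets the distance bound, whereas an unsatisfiable instance forces path conflicts, each costing a constant number of extra swaps and thereby opening a multiplicative gap. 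Keeping each color class of \emph{constant} size is what makes this reduction linear. The final step (Section \ref{sec:ReductionTSP}) strips the colors: to each constant-size color class I attach an even permutation network (Section \ref{sec:permutation}), which realizes any even permutation of the tokens within that class at a controlled, constant additive cost; because every class is constant-sized and the network has bounded degree, the overall blow-up stays linear and the degree stays bounded.

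Composing the links (Section \ref{sec:Finish}) then delivers all three results at once. NP-hardness is immediate. For the ETH bound, linearity guarantees that a 3SAT instance on $N$ variables produces a token-swapping instance on $n = \Theta(N)$ vertices, so a $2^{o(n)}$ algorithm would solve 3SAT in $2^{o(N)}$ time, contradicting the ETH. For APX-hardness I would instead start from a bounded-occurrence gap version of 3SAT, which is APX-hard, and check that each link is an L-reduction, i.e.\ a linear blow-up of both instance size and optimum together with a matching solution-recovery map; then a PTAS for token swapping would yield one for gap-3SAT, which is impossible. The bounded-degree refinement is inherited because every gadget, both the disjoint-path gadgets and the even permutation networks, is constructed with bounded degree.

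The main obstacle I anticipate is the \emph{tightness} demanded by the colored-to-uncolored step. For APX-hardness it is not enough that the reductions run in polynomial time: the additive and multiplicative error contributed by each even permutation network must be bounded by a constant and must not erode the gap carried over from 3SAT. This forces a two-sided analysis of each network, namely an explicit routing scheme giving an upper bound on its swap cost, together with a matching lower bound (again via the distance potential, suitably localized) proving that no cheaper realization exists. Coordinating these local guarantees across all color classes so that the global optimum is pinned down to within a constant factor, while simultaneously keeping every gadget of bounded degree and linear size, is where I expect the real difficulty to lie.
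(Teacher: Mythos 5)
Your overall route is the same as the paper's: the chain 3SAT $\to$ disjoint paths $\to$ colored token swapping $\to$ uncolored token swapping, with linearity giving the ETH bound, a gap argument from bounded-occurrence MAX-3-SAT giving APX-hardness, and constant-size color classes plus constant-size permutation networks preserving degree bounds. However, there are two concrete gaps.

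First, your ETH argument asserts that a 3SAT instance on $N$ variables yields a token-swapping instance on $\Theta(N)$ vertices. That is false as stated: the reduction of Lemma~\ref{lem:DPhard} produces a graph whose size is linear in the number of \emph{clauses}, and a 3SAT instance on $N$ variables may have $\Theta(N^3)$ clauses, so a $2^{o(n)}$ algorithm would only give $2^{o(N^3)}$ for 3SAT, which does not contradict ETH. You need the Sparsification Lemma of Impagliazzo, Paturi and Zane to first pass to 3SAT instances with $O(N)$ clauses; the paper invokes this explicitly.

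Second, and more seriously, your colored-to-uncolored step attaches an even permutation network to each color class and claims it ``realizes any even permutation of the tokens within that class.'' But the permutation actually required within a color class by a given solution of the colored instance need not be even, and as Section~\ref{sec:permutation} points out, no network can realize all permutations (odd and even) at the same cost, since every swap flips parity. Without addressing this, the reduction breaks whenever the needed class permutation is odd: the network would cost at least one extra swap, destroying the exact threshold $\overline{k}$ on which both the forward and backward directions of the correctness proof rely. The paper's fix is to \emph{duplicate the entire graph} $G$ and feed both copies of each color class into a single permutation network, so that the combined permutation is a product of two permutations of equal parity and hence always even (Lemma~\ref{lem:reductionTSP}). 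This doubling trick is the essential missing idea in your proposal; everything else is a matter of carrying out the gadget-level accounting you already anticipate.
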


Therefore, our algorithmic results are almost \emph{tight}.
Even though our exact algorithm is completely straightforward 
our reductions show that we cannot hope for much better results.
The algorithm has complexity $2^{O(n\log n)}$ and we can rule out $2^{o(n)}$ algorithms.
Thus there remains only a $\log n$-factor in the exponent remaining as a gap.

In addition, we provide a $4$-approximation algorithm and show that there is a small constant $c$ such that there exists no $c$-approximation algorithm.
This determines the approximability status up to the constant $c$.
We want to point out that 
a crude estimate for the constant $c$ in Theorem~\ref{thm:LowerBound} is $c \approx 1+\frac{1}{1000}$. We do not believe that it is worth to compute $c$ exactly.
Instead, we hope that future research might find reductions with better constants.
Note that all our lower bounds for the uncolored token swapping problem carry over immediately to the colored version.

\paragraph*{Related concepts.}
The famous $15$-puzzle is similar to the token swapping
problem, and indeed a graph-theoretic generalization of the $15$-puzzle 
was studied by Wilson \cite{WILSON197486}.
The $15$-puzzle can be modeled as a token-swapping problem by
regarding the empty square of the $4\times 4$ grid as
a distinguished token, but there remains
an important difference from our problem: in the $15$-puzzle,
two adjacent
tokens can be swapped only if one of them is the distinguished token.
For the $15$-puzzle and its generalization, the reachability question is
not trivial, but by the result of Wilson \cite{WILSON197486},
it can be decided 
in polynomial
time.
However, it is NP-hard to find the minimum number of swaps even for
grids~\cite{Ratner1990111}.

The token swapping problem can be seen as an instance of the minimum
generator sequence problem of permutation groups.  There, a
permutation group is given by a set of generators $\pi_1,\dots,\pi_k$,
and we want to find a shortest sequence of generators whose
composition is equal to a given target permutation $\tau$.  This is
the problem that Jerrum \cite{Jerrum1985265} studied.  He gave an
$O(n^2)$-time algorithm for the token swapping problem on cycles.  He
proved that the minimum generator sequence problem is PSPACE-complete
\cite{Jerrum1985265}.  The token swapping problem is the special
case 
when the set of generators consists only of transpositions, namely
those transpositions that correspond to the edges of~$G$.

In the literature, we also find the problem of \emph{token sliding}, but
this is different from the token swapping problem.
In the token sliding problem on a graph $G$, we are given two independent sets
$I_1$ and $I_2$ of $G$ of the same size.
We place one token on each vertex of $I_1$, and we perfom a sequence of
the following sliding operations: 
We may move a token on a vertex $v$ to another vertex $u$ if 
$v$ and $u$ are adjacent, 
$u$ has no token,
and after the movement the set of vertices with tokens forms
an independent set of $G$.
The goal is to determine if a sequence of sliding operations can move
the tokens on $I_1$ to $I_2$.
The problem was introduced by Hearn and Demaine \cite{Hearn200572}, and 
they proved that the problem is PSPACE-complete even for planar graphs
of maximum degree three.
Subsequent research showed that the token sliding problem is 
PSPACE-complete for perfect graphs \cite{Kaminski20129}
and graphs of bounded treewidth \cite{MouawadIPEC14}.
Polynomial-time algorithms are known for cographs \cite{Kaminski20129}, 
claw-free graphs \cite{BonsmaSWAT14}, 
trees \cite{Demaine2015132}
and bipartite permutation graphs \cite{FoxEpsteinISAAC15}.

Several other models of swapping have been studied in the
literature 
 \cite{GrafESA15, tokengraphs, doi:10.1137/060652063}.

\section{Simple Exact Algorithms}\label{sec:SimpleExact}
We start presenting our results with the simplest one. There is an exact, exponential time algorithm which, after the hardness results that we obtain
in Section~\ref{sec:Finish}, will prove to be almost tight to the lower bounds.

\SimpleAlgo*

\begin{proof}
  The algorithm is breadth-first search in the configuration graph. 
The nodes $\mathcal{V}$ of the configuration graph $\mathcal{G} =
  (\mathcal{V},\mathcal{E})$ consist of all  $n! = 2^{O(n\log n)}$
  possible configurations of tokens on the vertices 
of~$G$. 
  Two configurations $A$ and $B$ are adjacent if 
  there is a single swap that transforms $A$ to $B$.
  Each configuration has $m\le\binom{n}{2} = O(n^2)$ adjacent configurations. Thus the total
  number of edges is $n!m/2$. 
  A shortest path in $\mathcal{G}$ from the start to the target
  configuration gives a shortest sequence of swaps. Breadth-first
  search finds this shortest path.

  We assume that the graph fits into main memory and its nodes can be
  addressed in constant time.  We can maintain the set of visited
  nodes (permutations) in a compressed binary trie, where searches,
  insertions, and deletions can be done in $O(\log n!) = O(n\log n)$
  time.  The running time of breadth-first search is thus
  $O(n\log n)\cdot(|\mathcal{V}| + |\mathcal{E}|) = O(n!mn\log n)=2^{O(n\log
    n)}$. This establishes the first claim.
  
  In case that the number of required swaps is $k$, breadth-first
  search explores the configuration graph for only $k$ steps. 
 Each configuration has at most $m = O(n^2)$ neighbors. Thus the total number of explored configurations is bounded by $m^{k} \leq n^{2k}$. This implies the second claim.
  
  For the the third claim, we need the observation that there is an optimal sequence of swaps that swaps only misplaced tokens. Assume that $S = (s_1,\ldots,s_k)$ is an optimal sequence of swaps and $s_i$ is a swap, where two tokens are swapped that were at the correct position before the swap. We construct an optimal sequence $S'=(s_1',\ldots,s_k')$, with one less swap of correctly placed tokens. The new sequence $S'$ skips the swap $s_i$ and adds it to the very end. It is easy to check that $S'$ is indeed a valid sequence of swaps and has one less swap of already correctly placed tokens.  Note that if $k$ swaps are sufficient to bring the tokens into the target configuration, at most $2k$ tokens are misplaced.
  
  If the maximum degree of the graph is bounded by $\Delta$, then, by the
  aforementioned observation, the breadth-first search needs to branch
  in at most $2k
\Delta$ new configurations from each
  configuration. This implies the third claim.
\end{proof}

\section{A 4-Approximation Algorithm} \label{sec:Approx1}

We describe an approximation algorithm
for the token swapping problem, which has approximation factor 4
on general graphs and 2 on trees. First, we state the following 
simple lemma.
\begin{lemma}
  \label{lem:lower-bound}
Let $d(T_i)$ be the distance of token $T_i$ to the target vertex $i$.
Let $L$ be the sum of distances of all tokens to their target vertices:
 \begin{equation*}
L := \sum_{i=1}^n{d(T_i)}.  
 \end{equation*}
Then any solution needs at least $L/2$ swaps.
\end{lemma}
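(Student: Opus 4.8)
The plan is to track, for each token, how much its distance to its target vertex can change in a single swap, and then sum up over all tokens to bound the total decrease of the potential $L$.

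First I would observe that a single swap acts on exactly two tokens, say $T$ sitting on vertex $u$ and $T'$ sitting on the adjacent vertex $v$; every other token stays in place, so its distance to its target is unchanged. After the swap, $T$ is on $v$ and $T'$ is on $u$. Since $u$ and $v$ are adjacent (at graph distance~$1$), the distance of $T$ to its target vertex changes by at most~$1$ in absolute value, by the triangle inequality; the same holds for $T'$. Hence the quantity $L = \sum_i d(T_i)$ can decrease by at most~$2$ as a result of one swap.

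Next I would set up the telescoping argument. In the start configuration the potential equals $L$, and in the target configuration every token sits on its target vertex, so the potential equals~$0$. Each swap decreases $L$ by at most~$2$, so after $s$ swaps the potential is at least $L - 2s$. For the target configuration to be reached we need this to be~$0$, i.e.\ $L - 2s \le 0$, which forces $s \ge L/2$. This gives the claimed lower bound.

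I do not expect any real obstacle here; the only point that deserves a clean statement is the per-swap bound, namely that swapping tokens across a single edge changes each of the two affected tokens' target-distances by at most one. This is an immediate consequence of the triangle inequality in the graph metric, since the two vertices involved are at distance one. The rest is a straightforward monotone/telescoping accounting argument comparing the initial potential $L$ to the final potential~$0$.
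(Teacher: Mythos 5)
Your proposal is correct and is exactly the argument the paper uses: the paper's one-line proof (``every swap reduces $L$ by at most $2$'') is precisely your per-swap bound, and your telescoping from the initial potential $L$ to the final potential $0$ is the implicit accounting step. Nothing is missing; you have simply written out the details the paper leaves to the reader.
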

\begin{proof}
 Every swap reduces $L$ by at most $2$.
\end{proof}

We are now ready to describe our algorithm. It has two atomic operations. 
The first one is
called an \emph{unhappy swap}: This is an edge swap where 
one of the tokens swapped is already on its target and the other token reduces
its distance to its target vertex (by one).

The second operation is called a \emph{happy swap chain}. Consider a path of 
$\ell+1$ distinct vertices $v_1,\ldots, v_{\ell+1}$. We swap the tokens over edge 
$(v_1,v_2)$, then $(v_2, v_3)$, etc., performing
$\ell$ swaps in total.
 The result is that the token that was 
on vertex $v_1$ is now on vertex $v_{\ell+1}$ and all other tokens have moved
from $v_j$ to $v_{j-1}$.
If every 
swapped token reduces its distance by at least $1$, we call this a happy swap chain of length $\ell$.
Note that a happy swap chain could consist of a single swap on a single edge.
A single swap that is part of a happy swap chain is called a \emph{happy swap}.
When our algorithm applies a happy swap chain, there will be an edge between
$v_{\ell+1}$ and $v_1$, closing a cycle. In this case, the happy swap chain
performs a cyclic shift.
Figure~\ref{fig:hCS} illustrates this definition with an example.
Note that a happy swap chain might consist of a single swap. We leave it to the reader to find such an example.

\begin{figure}[htb] 
 	  \centering\includegraphics[width = 0.7\textwidth]{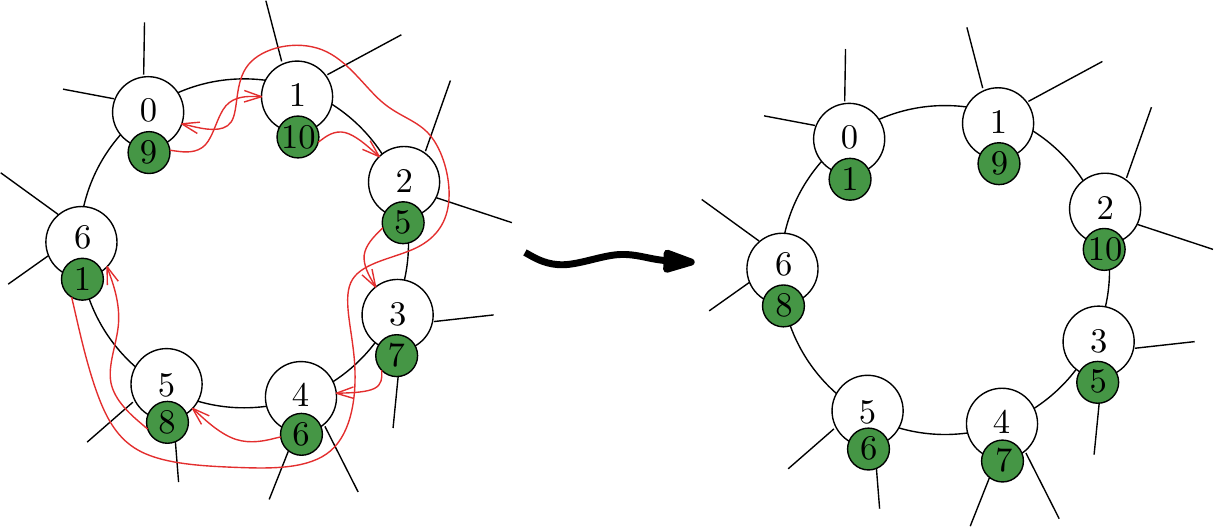}
  \caption{Before and after a happy swap chain. The swap sequence is, in this order, $6-5$, $5-4$, $4-3$, $3-2$, $2-1$, $1-0$. 
  Token $1$ swaps with every other token, moving counter-clockwise; every other token moves one step clockwise.}\label{fig:hCS}
\end{figure}


\begin{lemma}\label{lem:happyChain}
	Let $G=(V,E)$ be an undirected graph with a token placement
        where not every token is at its target vertex. Then there is a happy swap chain or an unhappy swap.
\end{lemma}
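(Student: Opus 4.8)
The plan is to follow each misplaced token one step toward its target and to argue that the resulting greedy walk either stalls at a correctly placed token, yielding an unhappy swap, or closes into a cycle, yielding a happy swap chain. This turns the existence claim into a simple statement about a functional (out-degree-one) digraph.

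First I would install a pointer structure. For every vertex $v$ whose token is misplaced, fix a shortest path from $v$ to that token's target and let $g(v)$ be its second vertex; then $g(v)$ is a neighbor of $v$ such that moving the token from $v$ to $g(v)$ decreases its distance to its target by exactly one (a \emph{good neighbor}). Such a $g(v)$ exists precisely because the token is misplaced and can reach its target, which is the standing assumption that makes the instance solvable; note $g(v)\neq v$. Vertices carrying a correctly placed token receive no pointer. I would then start at any vertex $y_0$ holding a misplaced token (one exists by hypothesis) and iterate $y_{i+1}:=g(y_i)$ as long as $y_i$ carries a misplaced token. Since $G$ is finite, this walk must either (a) reach a vertex $y_r$ carrying a correctly placed token, or (b) revisit a vertex, exposing a cycle $y_a,y_{a+1},\dots,y_{b-1}$ of distinct vertices with $y_b=y_a$.

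In case (a) the token on $y_{r-1}$ is misplaced while $y_r=g(y_{r-1})$ is a good neighbor, so the swap across the edge $y_{r-1}y_r$ moves the token on $y_{r-1}$ one step closer to its target while displacing the already-correct token on $y_r$; this is exactly an unhappy swap. In case (b) every token on the cycle has its successor as a good neighbor, so rotating all tokens one position around the cycle strictly decreases each moving token's distance. I would realize this rotation by swapping along the path obtained from the cycle by omitting one edge, identifying that omitted edge with the closing edge $v_{\ell+1}v_1$ of the definition; this is a happy swap chain.

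The main obstacle is not existence but the bookkeeping of case (b): one must check that the forward cyclic rotation produced by pointer-following really matches the happy-swap-chain template ``the token on $v_1$ travels to $v_{\ell+1}$ and every other token shifts back by one'', which amounts to traversing the cycle in the reverse orientation. I would verify this reindexing explicitly and observe that the cycle has length at least two (since $g(v)\neq v$), so at least one swap is performed. Connectivity of $G$ (or at least reachability of each target) is the only hypothesis I would need to flag beyond what is stated.
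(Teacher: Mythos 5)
Your proposal is correct and follows essentially the same argument as the paper: the paper builds the digraph $F$ of all distance-decreasing moves and follows a directed walk from a misplaced token until it either closes a cycle (happy swap chain) or hits an outdegree-zero vertex, i.e.\ a correctly placed token (unhappy swap), which is exactly your pointer-following construction restricted to one good neighbor per vertex. Your extra care about reversing the cycle orientation to match the happy-swap-chain template is a detail the paper leaves implicit, but it does not change the approach.
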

\begin{proof}
	Given a token placement on $G=(V,E)$, we define the directed graph $F$ on $V$ as follows. 
	For each undirected edge $e = \{v,w\}$ of $G$, we include the
        directed edge $(v,w)$ in~$F$ if the token 
	on $v$ reduces its distance to its target vertex by swapping along $e$. 
	Note that for a pair of vertices, both directed edges might be part of~$F$.
	We can perform a happy swap chain whenever we find a directed cycle in $F$.
	The outdegree of a vertex $v$ in $F$ is $0$ if and only if the token on $v$ has target vertex $v$. 
	Assume that not every token is in its target position. Choose any vertex $v$ that does not hold the 
	right token and construct a directed path from $v$ by
        following the directed edges of~$F$. This procedure will either revisiting a 
	vertex, and we get a directed cycle, or we encounter a vertex with outdegree $0$, and we get an unhappy swap.
\end{proof}

The lemma gives rise to our algorithm:
Search for a happy swap chain or unhappy swap; when one is 
found it is performed, until none remains.
If there is no such swap, the final placement of 
every token is reached.
This algorithm is polynomial time (follows from the proof of Lemma~\ref{lem:happyChain}). 
Moreover, it 
correctly swaps the tokens to their 
target position with at most $2L$ swaps.

\begin{lemma}\label{lem:noTwoUnhappy}
Let $T_i$ be a token on vertex $i$. If $T_i$ participates in an unhappy swap, then the next swap involving 
$T_i$ will be a happy swap. 
\end{lemma}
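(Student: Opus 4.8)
The plan is to track where $T_i$ lands immediately after the unhappy swap and to use its resulting distance to force the next swap involving it to be happy. First I would observe that, since $T_i$ sits on vertex $i$, which is its target, we have $d(T_i)=0$ just before the unhappy swap. In an unhappy swap exactly one of the two tokens is already on its target while the other strictly decreases its distance; a token at distance $0$ cannot decrease its distance, so $T_i$ must be the token that is already on its target. Hence the swap carries $T_i$ off $i$ onto an adjacent vertex $j$, and since $i$ and $j$ are adjacent we get $d(T_i)=1$ right after the swap.

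Next I would note that any swaps performed in between that do not involve $T_i$ leave both its position and (since the targets are fixed) its distance unchanged. Therefore, at the moment the next swap involving $T_i$ is carried out, we still have $d(T_i)=1$. Because the algorithm performs only happy swaps (as constituents of happy swap chains) and unhappy swaps, it suffices to rule out that this next swap is unhappy.

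To do this I would examine the two roles $T_i$ could play in an unhappy swap. It cannot be the token that is already on its target, because $d(T_i)=1\neq 0$. It also cannot be the token that decreases its distance: reducing the distance from $1$ is only possible by moving $T_i$ onto its unique target vertex $i$, that is, by swapping along the edge $\{i,j\}$ and displacing whatever token currently occupies $i$. That displaced token would have target $i$ only if it equaled $T_i$, which is impossible since $T_i$ is precisely the token being moved onto $i$; hence the displaced token is not on its target, and the swap fails the definition of an unhappy swap. Both roles being excluded, the next swap involving $T_i$ cannot be unhappy, so it must be a happy swap.

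The main obstacle, and the only place requiring care, is the last step: one must observe that the distance-reducing condition pins down the destination of $T_i$ uniquely as vertex $i$, and then exploit that $i$ is the target of no token other than $T_i$ itself. Everything else follows immediately from the definitions of happy and unhappy swaps and from the fact that $T_i$'s distance is frozen until it is next swapped.
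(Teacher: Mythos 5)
Your proof is correct and follows essentially the same route as the paper's: after the unhappy swap $T_i$ sits at distance $1$ from its target, the only distance-reducing move is back onto vertex $i$, and since no token other than $T_i$ has target $i$, the token it would displace there is never on its target, so no unhappy swap involving $T_i$ is possible. If anything, your version is slightly more careful than the paper's in explicitly noting that intervening swaps may change which token occupies vertex $i$ without affecting the argument.
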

\begin{proof}
 Refer to Figure~\ref{fig:uhS}.	Let the vertices $i,j$ be the ones participating in the unhappy swap and let 
 $e$ be the edge that connects them.
 On vertex $j$ is token $T_i$ that got unhappily removed from its target vertex
 and on vertex $j$ is token $S'$ whose target is neither $i$ nor $j$.
 Based on Lemma~\ref{lem:happyChain}, our algorithm performs either unhappy swaps or happy swap chains.
 Note that, currently, edge $e$ cannot participate in an unhappy swap. This is because none 
 of its endpoints holds the right token. Moreover, token $T_i$ cannot participate in an unhappy
 swap that does not involve edge $e$, as that would not decrease its distance. 
 Therefore, there has to be a happy swap chain that involves token $T_i$.
\end{proof}

\begin{figure}[htb]
  \centering\includegraphics[width = 0.7\textwidth]{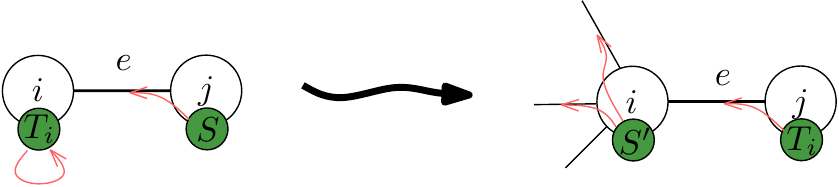}
  \caption{After a Token $T_i$ makes an unhappy swap along edge $e$, $T_i$ wants to go back to vertex $i$ and is not willing to go to any other vertex. Also whatever token $S'$ will be on vertex $i$, it has no desire to stay there. This implies that the next swap involving $T_i$ will be part of a happy swap chain.}\label{fig:uhS}
\end{figure}

\begin{theorem} \label{thm:approxration}
	Any sequence of happy swap chains and unhappy swaps is at most $4$ times as long as an optimal sequence
	 of swaps on general graphs and $2$ times as long on trees.
\end{theorem}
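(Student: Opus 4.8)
The plan is to play the lower bound of Lemma~\ref{lem:lower-bound} against two separate upper counts, one for happy swaps and one for unhappy swaps, using only two structural facts about the sequence: the effect of each operation on the potential $L=\sum_i d(T_i)$, and Lemma~\ref{lem:noTwoUnhappy}. Write $H$ for the number of individual happy swaps (a happy swap chain of length $\ell$ contributing $\ell$ of them) and $U$ for the number of unhappy swaps; since the optimum is at least $L/2$, it suffices to bound $H+U$ by a constant multiple of $L$.

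First I would pin down how each operation moves $L$. In a happy swap chain every swapped token decreases its distance by exactly one, so each individual happy swap lowers $L$ by exactly $2$; an unhappy swap pushes one on-target token off its target ($+1$) while its partner moves one step closer ($-1$), leaving $L$ unchanged. As $L$ starts at its initial value, ends at $0$, and is touched only by happy swaps, this gives $H=L/2$ exactly. The useful swaps are thereby fully accounted for, and the entire difficulty is to show the unhappy swaps are not too plentiful.

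For the unhappy swaps I would run a charging argument powered by Lemma~\ref{lem:noTwoUnhappy}. Each unhappy swap has a unique \emph{displaced} token, the one that had been sitting on its target; charge the swap to the next swap in which that token takes part. This next swap exists (a displaced token is off target and must eventually be corrected before the sequence ends) and, by Lemma~\ref{lem:noTwoUnhappy}, is happy. The heart of the matter is that this charge is at most two-to-one: a fixed individual happy swap touches only two tokens, so it can receive charge only from unhappy swaps whose displaced token is one of those two; and two distinct unhappy swaps cannot displace the \emph{same} token and still point to the same happy swap, since the later of them would then be a swap of that token strictly between the earlier unhappy swap and its declared ``next'' swap, contradicting immediacy. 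Hence $U\le 2H=L$, so $H+U\le \tfrac32 L\le 2L$, and dividing by the lower bound $L/2$ yields the factor $4$ claimed for general graphs (this estimate in fact already gives $3$; I would keep the looser bound only for cleanliness).

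The step I expect to be the real obstacle is sharpening the charging to be injective on trees so as to get the factor $2$. On a tree the auxiliary digraph $F$ of Lemma~\ref{lem:happyChain} admits no long directed cycle, so every happy swap chain collapses to a single edge-swap across some $\{v,w\}$ in which both tokens improve. I would then show such an edge-swap cannot absorb two unhappy swaps: if it were charged by the displacement of both of its tokens, tracing each displaced token back forces \emph{both} displacing unhappy swaps to have occurred across the very same edge $\{v,w\}$, with one token parked on $v$ and the other on $w$ from its displacement until the correcting swap; the two tokens cannot simultaneously occupy the two endpoints and also have furnished those earlier single-edge unhappy swaps, and the acyclicity of the tree leaves no room for the required back-and-forth traffic on one edge. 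This forces the charge to be one-to-one, whence $U\le H$, $H+U\le L$, and the factor drops to $2$. Making this parking-and-acyclicity argument airtight, rather than the bookkeeping for $L$, is where I would concentrate the effort.
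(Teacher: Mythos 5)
The general-graph half of your argument rests on the claim that each individual happy swap lowers $L$ by exactly $2$, hence $H=L/2$. That claim is false. The definition of a happy swap chain only requires each swapped token to reduce its distance by at least $1$ \emph{over the whole chain}; the token travelling from $v_1$ to $v_{\ell+1}$ need not make progress at every intermediate step. Concretely, for a cyclic shift of three tokens on a triangle, each token is at distance $1$ from its target, so $L=3$, and the solving happy swap chain consists of $2$ happy swaps, the first of which decreases $L$ by only $1$ (the travelling token remains at distance $1$). The correct statement, and the one the paper uses, is that a chain of length $\ell$ reduces $L$ by at least $\ell+1$, which only yields $\#(\text{happy swaps}) < L$ --- up to a factor of $2$ weaker than your $L/2$. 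Feeding this corrected count into your (correct, but deliberately loose) two-to-one charging bound $U \le 2H$ gives $H+U \le 3H < 3L \le 6\cdot\mathrm{OPT}$, i.e., a factor of $6$, so your argument as written does not establish the claimed factor $4$. The paper instead pairs $H<L$ with the stronger assertion $U\le H$, obtained by charging each unhappy swap \emph{injectively} to the next (necessarily happy, by Lemma~\ref{lem:noTwoUnhappy}) swap of its displaced token; this gives $H+U\le 2H<2L\le 4\cdot\mathrm{OPT}$. To repair your proof you would have to either make the charging injective in general (essentially the argument you only attempt for trees) or otherwise recover the lost factor of $2$; the accidental cancellation $\tfrac12\cdot 2=1$ in your arithmetic is what produces the superficially fine constant $3$.

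On trees your accounting is sound: there the auxiliary digraph has no directed cycles of length greater than $2$, every chain is a single swap in which both tokens improve, and $H=L/2$ genuinely holds. Your injectivity discussion for that case is in the right spirit, and the paper's own treatment is no more detailed --- it simply combines $H=L/2$ with the same $U\le H$ asserted for the general case.
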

\begin{proof}
Let $L$ be the sum of all distances of tokens to its target
vertex. We know that the optimal solution needs at least $L/2$
swaps as every swap reduces $L$ by at most $2$ (Lemma~\ref{lem:lower-bound}). We will show that our algorithm needs at most $2L$ swaps, and this implies the claim.

A happy swap chain of length $l$ reduces the total sum of distances by $l+1$. Thus 
\[\#(\mbox{happy swaps}) < L.\]
By Lemma~\ref{lem:noTwoUnhappy},
\[\#(\mbox{unhappy swaps}) \leq \#(\mbox{happy swaps}),\] and this implies
\[ \#(\mbox{swaps}) = \#(\mbox{unhappy swaps}) + \#(\mbox{happy swaps}) \leq 2\cdot \#(\mbox{happy swaps}) < 2L. \]
 On trees, this algorithm is a $2$-approximation algorithm, as the
 longest possible cycle in $F$ (as in Lemma~\ref{lem:happyChain}) has
 length $2$, and thus every happy swap reduces $L$ by
 \emph{two}. 
This implies \[\#(\mbox{happy swaps}) = L/2. \qedhere\]

\end{proof}

\section{The Colored Version of the Problem}
\label{relaxed}

In this section, we consider the version of the problem when some tokens are
indistinguishable.
More precisely, each token $T_i$ has a {color} $C_i$,
each vertex $j$ has a color $D_j$, and the goal is to
let each token arrive at a vertex of its own color.
We call this the \emph{colored token swapping problem}. 
Of course we have to assume that the number of tokens of each color is
the same as the number of vertices of that color.

We will see that all approximation bounds from the previous
sections carry over to this problem.
Our approach to the problem is easy:
\begin{enumerate}
\item We first decide which token goes to which target vertex.
\label{matching}
\item We then apply one of the algorithms from the previous sections
  for $n$ distinct tokens.
\label{go}
\end{enumerate}

For Step~\ref{matching}, we solve a bipartite minimum-cost matching
problem based on distances in the graph: 
For each color $k$ separately, we set up
a complete bipartite graph between the tokens
 $T_i$ of color $k$ and the vertices $j$ of color $k$.
The cost $d_{ij}$ of each arc is the graph distance from
the starting position of token $T_i$ to vertex $j$.
We solve the assignment problem with these costs and
obtain a perfect matching 
between tokens 
and vertices.
We put the optimal matchings for the different colors together and
get a bijection 
$\pi\colon \{T_1,\dots,T_n\}\to V$
between all tokens and all vertices, of total cost
$$L^* = \sum_{i=1}^n d_{i\pi(i)}.$$

Finally, for Step~\ref{go}, we renumber each vertex $\pi(T_i)$ to $i$
and apply an algorithm for distinct tokens, moving
token $T_i$ to the vertex whose original number is $\pi(T_i)$.

The reason why the approximation bounds from the previous sections
carry over to this setting is that they are based on the lower bound
in Lemma~\ref{lem:lower-bound} from the sum of the graph distances.

\begin{lemma}
Consider any (not necessarily optimal) swap sequence 
$S$ that solves the colored version of the problem,
and let $L$ be the sum of the distances between each token's initial
and final position. Then $L\ge L^*$.
\end{lemma}
\begin{proof}
Since $S$ solves the problem, it must move each token $T_i$ to some
vertex $\sigma(T_i)$ of the same color, defined by an assignment 
$\sigma\colon \{T_1,\dots,T_n\}\to V$ which might be different than $\pi$.  
The sum of the distances then is 
\begin{equation}
  \label{eq:L}
L = \sum_{i=1}^n d_{i\sigma(i)},  
\end{equation}
 must satisfy  $L\ge L^*$, since the assignment $\pi$ was constructed
as the assignment minimizing the expression~\eqref{eq:L} among
all assignments $\sigma$ that respect the colors.
\end{proof}

The assignment problem on a graph with $N+N$ nodes and arbitrary costs
can be solved in $O(N^3)$ time. This must be summed over all colors,
giving a total time bound of $O(n^3)$ for Step~\ref{matching}. If the
color classes are small, the running time is of course
better. Depending on the graph class, the running time may also be
reduced. For example, on a path, the optimal matching can be
determined in linear time, assuming that the colors are consecutive
integers.

\Approximation*


\section{Even Permutation Networks}\label{sec:permutation}
Before we dive in the details of the reductions for token swapping (Sections~\ref{sec:disjoint}-\ref{sec:ReductionTSP}),
we first introduce the even permutation network. This is the gadget we have been advertising in the introduction. 
We consider it stand-alone and this is why we present it here, independently of the  other parts of the reduction.

Consider a family of token swapping instance $I(\pi)$, which depends on a permutation $\pi$ from some specific set of input vertices $V_\textup{in}$ to some set of output vertices $V_\textup{out}$. The permutation $\pi$ specifies, for each token initially on $V_\textup{in}$ the target vertex in $V_\textup{out}$. For every other token, the target vertex is independent of $\pi$. If the optimal number of swaps to solve $I(\pi)$ is the same for every \emph{even} permutation $\pi$, we call this family an \emph{even permutation network}, see Figure~\ref{even-perm}. We will refer to the permutation also as \emph{assignment}. Recall that a permutation is even, if the number of inversions of the permutation is even. Realizing \emph{all} permutations at the same cost is impossible, since every swap changes the parity, and hence all permutations reachable by a given number of swaps have the same parity.

We will use the even permutation networks later, to reduce from the \emph{colored} token swapping problem to the token swapping problem, see Lemma~\ref{lem:reductionTSP} and Figure~\ref{fig:attachPermuationNetwork}. The idea is to identify one color class with the input vertices of a permutation network. 
We use the simple trick of doubling the whole input graph before attaching the permutation networks. We thereby ensure that realizing even permutations is sufficient: the product of two permutations of the same parity is always even.

\begin{figure}[htbp]
  \centering
  \includegraphics{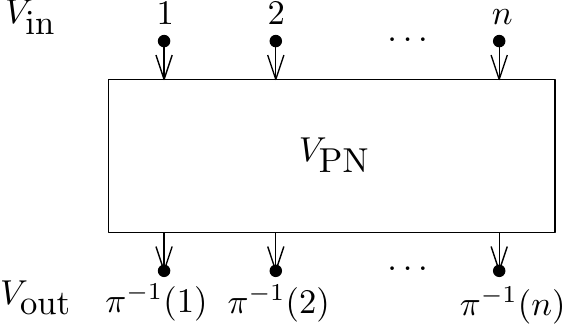}
  \caption{The interface of a permutation network PN}
  \label{even-perm}
\end{figure}

The constructed network has $n$ input vertices $V_\textup{in}$ and $n$ output
vertices $V_\textup{out}$, plus $O(n^3)$ inner vertices $V_{\mathrm{PN}}$.  For
each token on $ V_{\mathrm{PN}} \cup V_\textup{out}$, a fixed target vertex in
$V_{\textup{in}} \cup V_{\mathrm{PN}}$ is defined. The target vertices of~$V_\textup{in}$ lie in $V_\textup{out}$, but are still unspecified.
In this section, and only in this section, we refer to all tokens initially placed on $ V_{\mathrm{PN}} \cup V_\textup{out}$ as \emph{filler tokens}.


\begin{lemma}\label{lem:PermutationNetworks}
  For every $n$, there is a permutation network
$\mathrm{PN}
$ with $n$ input vertices $V_\textup{in}$, $n$ output
vertices $V_\textup{out}$,
and $O(n^3)$ additional vertices
$V_{\mathrm{PN}}$, which has the following properties, for some value $T$\textup:
\begin{itemize}
\item For every target assignment $\pi$ between the inputs $V_\textup{in}$ and the outputs $V_\textup{out}$ that is an even permutation, the shortest swapping sequence has
 length~$T$.
\item For any other target assignment, the shortest realizing sequence
  has length at least $T+1$.
\item The same statement holds for any extension
of the network PN,
 which is the union of PN with another graph $H$ that shares only the vertices $V_\textup{in}$ with PN, and in which the starting positions of the tokens assigned to $V_\textup{out}$ may be anywhere in $H$.
\end{itemize}
\end{lemma}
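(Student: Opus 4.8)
The plan is to isolate a clean \emph{parity engine} and reduce everything to two quantitative claims. Observe first that any swap sequence realizes a single permutation $\sigma$ of all token positions, and since each swap is a transposition, the number of swaps has the same parity as $\sigma$. Of the tokens, all those on $V_{\mathrm{PN}}\cup V_\textup{out}$ have $\pi$-independent targets, while the tokens on $V_\textup{in}$ are sent to $V_\textup{out}$ according to $\pi$; hence replacing $\pi$ by $\pi$ composed with one transposition changes $\sigma$ by one transposition and flips its sign. Consequently $\#(\mbox{swaps})\equiv \operatorname{sgn}(\pi)+\text{const}\pmod 2$, so even and odd assignments force swap counts of fixed, opposite parities. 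It therefore suffices to prove (I) a uniform lower bound of $T$ swaps valid for \emph{every} assignment $\pi$, and (II) that every \emph{even} $\pi$ is realizable in exactly $T$ swaps: an odd $\pi$ then needs at least $T$ swaps by (I) but cannot equal $T$ by parity, so it needs at least $T+1$.

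For the lower bound (I), I would design $\mathrm{PN}$ so that the distance sum $L=\sum_i d(T_i)$ of Lemma~\ref{lem:lower-bound} equals one and the same constant $L_0$ for every assignment. This is arranged by routing $V_\textup{in}$ and $V_\textup{out}$ through a common internal core so that the graph distance from \emph{any} input to \emph{any} output is a fixed value, while the filler tokens on $V_{\mathrm{PN}}\cup V_\textup{out}$ have fixed targets at fixed distances. Then Lemma~\ref{lem:lower-bound} gives $\#(\mbox{swaps})\ge L_0/2=:T$ for every $\pi$, even and odd alike.

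For the matching upper bound (II), the internal core is built as a cascade of elementary gadgets --- in the style of an odd--even transposition (bubble-sort) network of depth $O(n)$ and width $O(n)$, whose $O(n^2)$ comparators are replaced by size-$O(n)$ routing gadgets (accounting for the $O(n^3)$ inner vertices) --- through which any even permutation of the inputs can be realized using only swaps that reduce $L$ by exactly $2$, i.e.\ swaps in which \emph{both} participating tokens move toward their targets (as in the tree case of Theorem~\ref{thm:approxration}). Such a realization touches the bound of Lemma~\ref{lem:lower-bound} with equality and therefore uses exactly $L_0/2=T$ swaps. Together with (I) this pins the optimum of every even $\pi$ to exactly $T$.

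Finally, the extension property follows the same two pillars, now inside $\mathrm{PN}\cup H$. Since $H$ meets $\mathrm{PN}$ only in $V_\textup{in}$ and only relocates the \emph{starting} positions of the output-assigned tokens into $H$ (their targets and the whole internal core are untouched), the $H$-side traffic --- ferrying those tokens to the gateway $V_\textup{in}$ --- is independent of $\pi$, so the total cost again splits as a $\pi$-independent $H$-term plus the internal cost, which is the constant $T$ for even $\pi$; and the uniform distance bound degrades only by the (fixed) extra $H$-distance, giving a uniform lower bound in the extended instance. The parity engine is unaffected, since changing $\pi$ by a transposition still changes $\sigma$ by a transposition. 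The \textbf{main obstacle} is the simultaneous tightness required for (II): one must build the core so that \emph{every} even permutation --- and not merely some --- admits a realization by $2$-reducing swaps hitting $L_0/2$ exactly, and so that no assignment can beat this bound by exploiting shortcuts through $H$ or by cleverly reusing filler tokens. Guaranteeing the latter forces every token across a genuine bottleneck so that $H$ offers no shortcut, and verifying the former for all even permutations at once is the technical heart of the construction.
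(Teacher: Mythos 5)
Your parity bookkeeping is correct and matches the paper's own remark preceding the lemma: a transposition applied to $\pi$ flips the sign of the overall permutation realized by the swap sequence, so a uniform lower bound of $T$ for all assignments plus an exact realization in $T$ swaps for every even $\pi$ would indeed yield the first two bullets. The genuine gap is that both pillars (I) and (II) are only asserted, and the specific route you choose for them --- forcing the distance lower bound of Lemma~\ref{lem:lower-bound} to be attained with equality --- is almost certainly a dead end. In the paper's construction the optimum $T$ is \emph{strictly larger} than $L_0/2$: the cost is controlled not by the distance sum but by a layered-structure argument (every vertex has at most one neighbor in the adjacent layers, so each downward edge must be used exactly once, and sequences are compared by counting the \emph{horizontal} edges they use). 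The horizontal swaps that implement the routing choices move filler tokens sideways, not toward their targets, so those swaps reduce $L$ by at most one; a construction in which \emph{every} swap is $2$-reducing for \emph{every} even $\pi$ simultaneously (with the filler targets fixed independently of $\pi$) is a far stronger requirement than anything the paper achieves, and you give no evidence it can be met. You correctly flag the equal-cost property of the elementary gadget as ``the technical heart,'' but that is precisely the content of the lemma: the paper solves it with an auxiliary token $a$ that is forced to cross tracks inside the swapping gadget (making the identity cost $2$ and a transposition cost $3$) and then a second auxiliary token $b$ that forces \emph{both} swapping gadgets of the shift gadget to take a non-identity option, equalizing the cost of ``identity'' and ``cyclic shift.'' Without such a forcing mechanism, doing nothing is always cheaper than permuting, and no cascade of comparators will have a $\pi$-independent optimum.

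The extension property has the same problem in your write-up: a global bound of the form $L/2$ over $\mathrm{PN}\cup H$ does not localize to $\mathrm{PN}$, because slack on the $H$ side (where you control nothing) can absorb the deficit, and your claimed additive split of the cost into an $H$-term and a PN-term is not justified when tokens may wander back and forth across the interface $V_\textup{in}$ or end up somewhere in $H$ other than their nominal targets. The paper's layered argument survives the extension precisely because it only counts edge uses \emph{inside} $\mathrm{PN}$: all $n$ tokens destined for $V_\textup{out}$ must still enter through $V_\textup{in}$ and traverse every downward edge once, whatever happens in $H$.
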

The last clause concerns not only all assignments between $V_\textup{in}$ and $V_\textup{out}$ that are odd permutations, but also all other conceivable situations where the tokens destined for $V_\textup{out}$ do not end up in $V_\textup{in}$, but somewhere else in the graph $H$. The lemma confirms that such non-optimal solutions for $H$ cannot be combined with solutions for PN to yield better swapping sequences than the ones for which the network was designed.

\begin{proof}
The even permutation network will be built up hierarchically from small gadgets.
Each gadget is built in a layered manner, subject to the following rules.
\begin{enumerate}
\item There is a strict layer structure: The vertices are partitioned
  into layers $U_1,\ldots,U_t$ of the same size.
\item Each gadget has its own input layer $V
=U_1$ at the top and its
  output layer $V'
=U_t$ at the bottom,
just as the overall network PN.
\item
\label{assign-unique}
 Edges may run
 between two vertices of
  the same layer
(\emph{horizontal  edges}), or between adjacent layers
(\emph{downward edges}).
\item Every vertex has at most 
 one neighbor in the successor layer 
and at most one neighbor in the predecessor layer.
\end{enumerate}
The goal is to bring the input tokens from the input layer $V_{\text{in}}$ to the output
layer~$V_{\text{out}}$. By Rule \ref{assign-unique}, the cheapest conceivable way to
achieve this 
is by using only the downward edges,
and then every such edge is used precisely once.

The gadget in Figure~\ref{fig:transpose} has an additional input
vertex $a$ with a fixed destination, indicated by the label~$a$.
This assignment forces us to use also horizontal edges, incurring some
extra cost as compared to letting the input tokens travel only
downward.

It can be checked easily that we only need to consider candidate 
swapping sequences in which every downward edge is used precisely once. 
This has the following consequences.
\begin{enumerate}
\item 
We can compare the cost of swapping sequences by counting the
horizontal edges used.
\item Each of the filler tokens,
which fill the vertices of the graph,
  moves one layer up to a fixed target location. 
  Thus we can predict the final
  position of all the auxiliary filler tokens that fill the network.
\end{enumerate}
Let us now look at the gadget of Figure~\ref{fig:transpose}.
It has three inputs 1,2,3 that connect the gadget with other gadgets, plus
an additional input $a$ with a specified target.
The initial vertex of $a$ and the target vertex of $a$ are not
connected to other parts of the graph.
The possible permutations of 1,2,3 that arise on the output are in the table
below the network, together with the number $\#h$ of used horizontal edges
(i.e., the relative cost).

Let us check this table.
To get the token $a$ into the correct target position, we must get it from the
fourth track to the third track, using (at least)
two horizontal edges ($AB$ or $DE$ or $AE$). With this minimum
cost of 2, we reach the identity permutation $1,2,3$. If we use edge
$C$, we can also reach the permutations $3,2,1$ (together with $A,B$)
and $2,1,3$ (together with $D,E$), at a cost of 3. All other possibilities
that bring the tokens from $V_{\text{in}}$ to $V_{\text{out}}$ and bring $a$ to the correct
target cost more, including ``crazy'' swapping sequences where the
tokens $1,2,3,a$ make intermediate upward moves.

In summary, the gadget can either realize the identity permutation, or
it can swap 1 with one of the other tokens at an extra cost of~1. 
We call this the \emph{swapping gadget}, and we will  symbolize it like 
in the right part of Figure~\ref{fig:transpose}, omitting the
auxiliary token~$a$.

\begin{figure}
  \centering
  \includegraphics{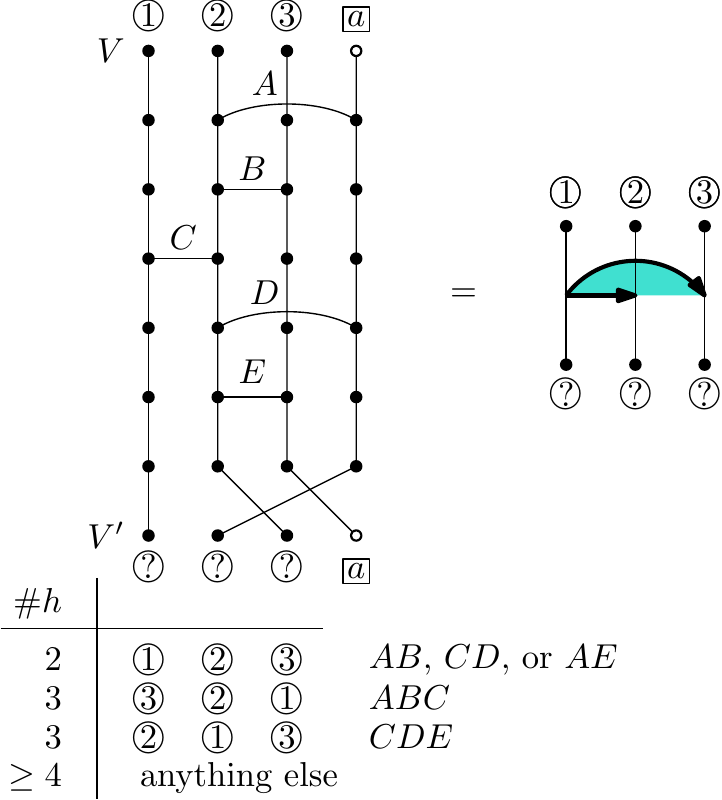}
  \caption{The gadget for swapping 1 with 2 or 3.  The number of
    horizontal edges used is denoted by $\#h$.}
  \label{fig:transpose}
\end{figure}

The next gadget is the \emph{shift gadget}, which is composed of two
swapping gadgets as shown in Figure~\ref{fig:shift}.
When putting together gadgets, the tracks must be filled up with
vertices on straight segments to maintain the strict layer structure, but we
don't show these trivial paths in the pictures.

We have a token $b$ with a fixed target, which is not connected to the
rest of the graph. To bring $b$ from the first track to the fourth
track and hence to the target, we cannot use the identity permutation
in 
 the 
 swap gadgets;
in both swapping gadgets, we must use one of the more expensive
swapping choices.
There are two choices that put $b$ on the correct target:
The choice ``swap $b\leftrightarrow2 $ and then swap $b\leftrightarrow 3$'' leads to the identity
permutation $1,2,3$.
The choice ``swap $b\leftrightarrow1$ and then swap
$b\leftrightarrow3$'' leads to the cyclic shift $3,1,2$.
We thus conclude that this gadget has two minimum-cost solutions:
The identity, or a cyclic shift to the right.
The shift gadget is symbolically drawn as a rectangular box as shown
in the right part of Figure~\ref{fig:shift},
indicating the optional cyclic shift.

\begin{figure}
  \centering
  \includegraphics{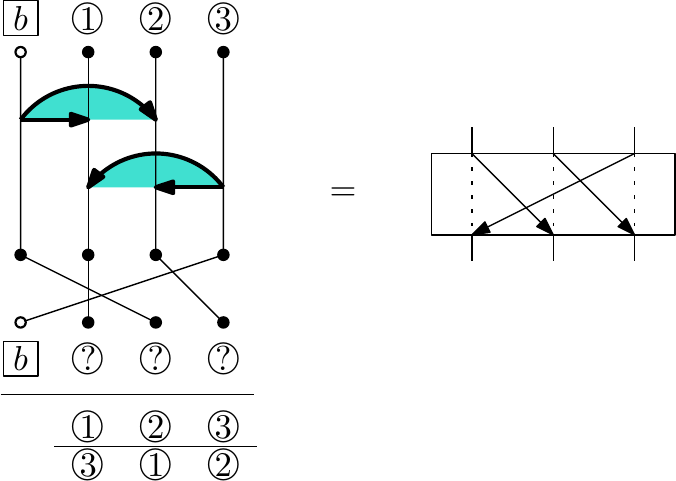}
  \caption{The gadget for an optional cyclic shift of 1,2,3 by one position}
  \label{fig:shift}
\end{figure}

From the shift gadget, we can now build the whole permutation network. 
The idea is similar to the realization of a permutation as a product of
adjacent transpositions, or to the bubble-sort sorting network.
Figure~\ref{fig:end} shows how a cascade of $n-1$ shift gadgets can
bring any input token to the last position.  We add a cascade of $n-2$
shift gadgets that can bring any of the remaining tokens to the
$(n-1)$-st position, and so on. In this way, we can bring any desired token
into the $n$-th, $(n-1)$-st, \dots, 3rd position. The first two
positions are then fixed, and thus we can realize half 
of all permutations,
namely the even ones.
\end{proof}

\begin{figure}
  \centering
  \includegraphics{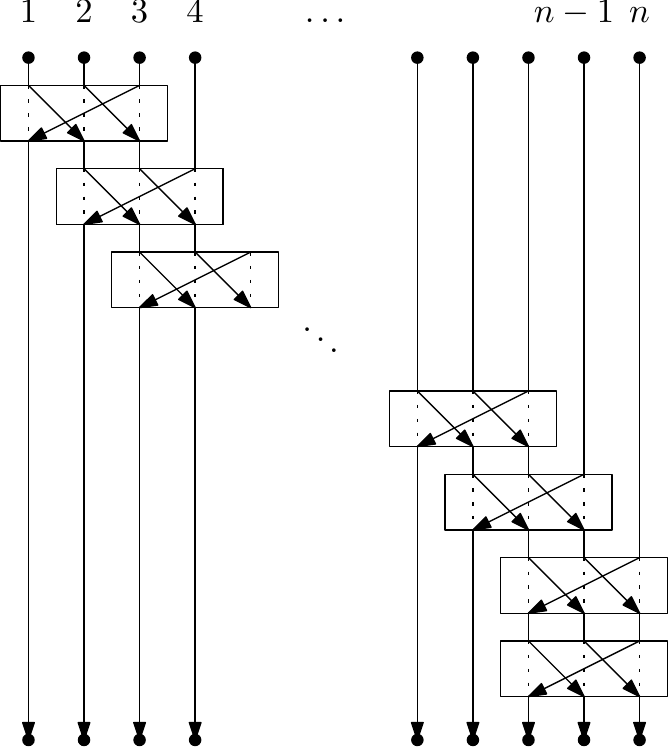}
  \caption{The gadget for bringing any desired token to the last position}
  \label{fig:end}
\end{figure}


\section{Reduction to a Disjoint Paths Problem}\label{sec:disjoint}

For the lower bounds of the Token Swapping problem, we study
some auxiliary problems. The first problem is a special
multi-commodity flow problem.
\begin{quote}
   \noindent \textbf{Disjoint Paths on a Directed Acyclic Graph (DP)}
 \\  \textbf{Input:} A directed acyclic graph $G=(V,E)$
and a bijection
 $\varphi \colon V^- \to V^+$
between the sources $V^-$ (vertices without incoming
arcs) and the sinks $V^+$ (vertices without outgoing arcs), with the
following properties:
 \begin{enumerate}
\item \label{layers}
The vertices can be partitioned
into {layers}
 $V_1,V_2,\ldots,V_t
$ such that,
 for every vertex 
in some layer $V_j$, all incoming arcs (if any)
come from the same layer $V_i$, with $i<j$.
Note that $i$ need not be the same for 
every vertex in $V_j$.
\item Every layer contains at most 10 vertices.
\item 
For every $v\in V^-$, there is a path from
 $v$ to $\varphi(v)$ in $G$.
Let $n(v)$ denote the number of vertices on the shortest path from
 $v$ to $\varphi(v)$.
\item \label{pack}
The total number of vertices is $
|V|= \sum_{v\in V^-} n(v)$.
 \end{enumerate}
 \textbf{Question:}
 Is there a set of vertex-disjoint directed paths 
 $P_1,\ldots,P_k$ with $k=|V^-|=|V^+|$, such that 
 $P_i$ starts at some vertex $v \in V^-$ and ends at $\varphi(v)$?
 \end{quote}
By Property \ref{pack}, the $k$ paths must completely cover the
vertices of the graph.
The graphs that we will construct in our reduction have in fact
the stronger property
that \emph{any} directed path from
$v\in V^-$ to $\varphi(v)$ 
contains the same number $n(v)$ of vertices.

   In our construction, we will label the source and the sink that
   should be connected by a path
   by the same symbol $X$, and ``the path $X$'' refers to this
   path.
In the drawings, the arcs will be directed from top to bottom.

 \begin{lemma}\label{lem:DPhard}
   There is a linear-size reduction from 3SAT to the
{Disjoint Paths Problem on a Directed Acyclic Graph (DP)}.
 \end{lemma}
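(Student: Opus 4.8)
The plan is to give a gadget-based reduction that turns a 3SAT formula $\phi$ with variables $x_1,\dots,x_N$ and clauses $C_1,\dots,C_M$ (after the standard sparsification we may assume $M=O(N)$, so that ``linear size'' is meaningful) into a DP-instance $G$. I would lay the clause gadgets out \emph{sequentially} along the layering: clause $C_j$ occupies a constant-sized block of consecutive layers, and these blocks are stacked one below another. The central trick that keeps every layer small is that the layering rule permits an incoming arc of a vertex to originate from \emph{any} earlier layer, not just the adjacent one. Hence a path may ``fly over'' a long stretch of layers along a single long arc without occupying any vertex in between. I will exploit this so that, inside the block of $C_j$, only the (at most three) variables actually occurring in $C_j$ are present, while every other variable's track is routed past $C_j$ by a single skip-arc. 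Since each clause touches only three variables, a constant number of paths pass through any layer, which is exactly what yields the bound of $10$ vertices per layer.

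For each variable $x_i$ I would install two parallel \emph{tracks}, a \textbf{true-track} and a \textbf{false-track}, and a single path, call it $X_i$, from a source in $V^-$ to a sink in $V^+$, that is forced to run along exactly one of them. The tracks thread through the blocks of the clauses that contain $x_i$, in order, and consecutive blocks are joined by skip-arcs that connect a true-exit only to the next true-entry and a false-exit only to the next false-entry. This is the mechanism that enforces \emph{consistency}: once $X_i$ commits to, say, the true-track in its first clause, no arc ever lets it cross over, so it stays ``true'' everywhere. Because Property~\ref{pack} forces every vertex to lie on some path, the track \emph{not} chosen by $X_i$ must still be covered, so inside each block I add short \emph{filler paths}, with their own matched sources and sinks, that absorb exactly the vertices of the unused track.

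The clause gadget itself must admit a valid vertex-disjoint path cover iff at least one of its three literals is satisfied. I would realize this with a dedicated \emph{clause path} $Y_j$ that can reach its designated sink only by routing through a ``slot'' that a satisfied literal leaves free: if a literal of $C_j$ is satisfied by the track on which the corresponding $X_i$ sits, a vertex is vacated that $Y_j$ can use as a detour, whereas if all three literals are false, all three slots are blocked, $Y_j$ cannot be completed, and no cover exists. To respect the exact-cover and uniform-length requirements I would build the gadget symmetrically, so that every one of the finitely many admissible local configurations covers the same vertex set and every realizing path attains the length $n(v)$ promised in the statement; additional filler paths absorb the slots freed by \emph{extra} satisfied literals that $Y_j$ does not use.

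Finally I would verify the four defining properties of a DP-instance (the same-layer in-arc rule, at most $10$ vertices per layer, existence of a $v$-to-$\varphi(v)$ path for each source, and $|V|=\sum_{v\in V^-} n(v)$), together with the stronger uniform-length property, and then prove the equivalence ``$\phi$ is satisfiable $\iff$ $G$ has the desired disjoint paths'' by reading an assignment off the chosen tracks and, conversely, by routing each $X_i$ on the track dictated by a satisfying assignment. The count $O\bigl(\sum_i \deg(x_i)+M\bigr)=O(M)=O(N)$ shows the reduction is linear. The main obstacle is the simultaneous design of the clause gadget: it must (i) be passable exactly when the clause is satisfied, (ii) cover \emph{every} vertex exactly once under every admissible routing so that Property~\ref{pack} holds, (iii) make all competing routings have identical length, and (iv) still fit within the width bound of $10$. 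Balancing the exact-cover bookkeeping (via filler paths) against the satisfaction logic and the length constraint is where the real care is needed; by contrast, the variable-consistency and width arguments follow cleanly from the skip-arc construction.
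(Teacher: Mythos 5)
Your architecture coincides with the paper's: two parallel tracks per variable whose truth value is fixed by the first arc and kept consistent because each track reconnects only to itself; filler (in the paper, ``supplementary'') paths with their own source--sink pairs covering the unused track; a short clause path that can be completed only by detouring into a slot vacated by a satisfied literal; and the width bound coming from the fact that a clause gadget's layers contain only the at most three variables occurring in that clause, all other tracks bypassing it along long arcs --- which is exactly what the relaxed layering rule (incoming arcs from \emph{some} earlier layer, not necessarily the adjacent one) permits. The linear-size accounting is also the same.

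The gap is that you never construct the clause gadget; you only list the four properties it must satisfy and state that balancing them ``is where the real care is needed.'' That balancing \emph{is} the content of the lemma, and it cannot be discharged abstractly. Concretely, one must exhibit a constant-size gadget in which (a) the clause path cannot reach its sink through its own column --- in the paper this is arranged by giving the clause's middle vertex $z$ its incoming arc only from a variable track, so the clause path is \emph{forced} to detour; (b) the detour vertex $v$ sits on the middle layer of the track that the variable path occupies precisely when the literal is false, so the detour is available iff the literal is satisfied; and (c) the supplementary paths $s_j$ (ending inside the gadget) and $s_{j+1}$ (starting inside it) have enough routing freedom to absorb $z$ and whichever freed slots the clause path does not use, in \emph{every} satisfying local configuration, while all competing routings cover the same number of vertices so that the exact-cover requirement $|V|=\sum_{v\in V^-} n(v)$ and the uniform-length property survive. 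Without such an explicit gadget and its case analysis, the backward direction of the equivalence --- ``all three literals false $\Rightarrow$ no vertex-disjoint cover exists'' --- is asserted rather than proved, since it requires ruling out every alternative routing inside a concrete graph; likewise the bound of $10$ vertices per layer can only be verified once one knows exactly how many columns (two track vertices plus one supplementary vertex per occurring variable, plus one clause vertex) each layer actually contains.
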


 \begin{proof}
 Let $x_1, \ldots, x_n$ be the variables 
 and $C_1, \ldots, C_m$ the clauses of the 3SAT formula.
   Each variable $x_i$ is modeled by a variable path, which has a choice
   between two tracks.  
   The track is determined by the choice of the first vertex 
   on the path after $x_i$: either $x_i^T$ or $x_i^F$ . This choice 
   models the truth assignment.
 There is also a path for each clause.  In
   addition, there will be supplementary paths that fill the unused
   variable tracks.
Figure~\ref{fig:variable} shows an example of a variable $x_i$ that
appears in three clauses $C_u$, $C_v$, and $C_w$. 
Consequently, the
two tracks  $x_i^T$ and $x_i^F$, which run in parallel, 
pass through three \emph{clause} gadgets, which are shown schematically
as gray boxes in Figure~\ref{fig:variable} and which are drawn in greater detail in
Figure~\ref{fig:clause}.
The bold path in Figure~\ref{fig:variable} corresponds to assigning
the value \emph{false} to $x_i$: the path follows the track $x_i^F$.
For a variable that appears in $\ell$ clauses, there are $\ell+1$
supplementary paths. In
Figure~\ref{fig:variable}, they are labeled $s_1,\ldots,s_4$.
The path $s_j$ covers the unused track ($x_i^T$ in the example)
between the $(j-1)$-st and the $j$-th clause in which the variable
$x_i$ is involved. 

Initially, the path $x_i$ can choose between the tracks
$x_i^T$ and $x_i^F$; the other track will be covered by a path starting at $s_1$.
This choice is made possible by a \emph{crossing} gadget. Each variable 
has two crossing gadgets attached, one at the beginning of the variable path 
and one at the end. Those gadgets consist of 6 vertices: $x_i$, $s_1$,
$x_i^T$, $x_i^F$ and two auxiliary vertices that allow the variable to change
tracks. In Figure~\ref{fig:variable}, the crossing gadgets appear at the top
and the bottom. Note, that a variable can only change tracks in the 
crossing gadgets;
the last supplementary path 
$s_{\ell+1}$ allows the path $x_i$ to reach its target sink. 	

   \begin{figure}[htb]
     \centering
     \includegraphics{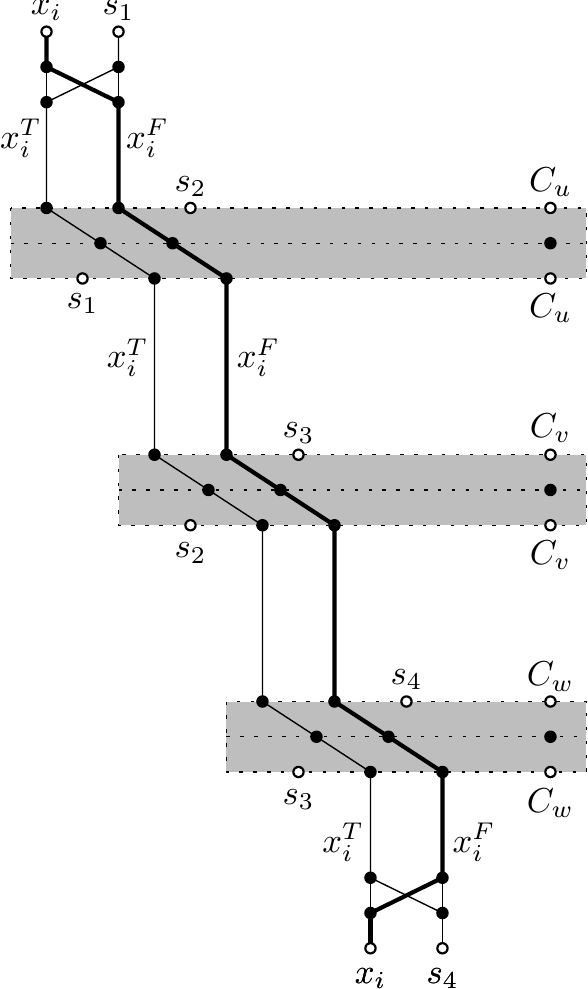}
     \caption{Schematic representation of a variable $x_i$. Sources and
       sinks are marked by white vertices, and their labels indicated the
       one-to-one correspondence $\varphi$ between sources and
       sinks. Arcs are directed from top to bottom.}
     \label{fig:variable}
   \end{figure}

   Figure~\ref{fig:clause} shows a clause gadget $C_z$ in greater
   detail. It consists of three successive layers and connects the
   three variables that occur in the clause. The clause itself is
   represented by a clause path that spans only these three layers. 
   A supplementary path starts at the first layer and one 
   ends at the third layer of each clause gadget. 
   Each layer of the clause gadget has at most 10 vertices: three for each variable, 
   two that are on the tracks $x_i^T,x_i^F$, one for the supplementary path of the variable.
   There is at most three variables per clause. Finally, there is a vertex that
   corresponds to the clause itself.
   
   Let $C_z$ be the current clause which is the $j$th clause in which $x_i$ appears,
   and it does so as a positive literal (as in Figure~\ref{fig:clause}a). 
   Each track, say $x_i^T$, connects
   to the corresponding vertex in the middle and bottom layer of the clause gadget
   and to $s_j$ (the end of the supplementary path). The track of 
   the literal that does not appear in this clause ($x_i^F$ in this case)
   is also connected to the vertex 
   of the clause on the middle layer ($z$ in Figure~\ref{fig:clause}). 
   Moreover, the middle layer vertex of this 
   track is connected to the top and bottom layer vertices that correspond to 
   the clause.
   The supplementary path $s_{j+1}$
   starts in this clause gadget and goes through the middle layer 
   and then connects to the vertices of both tracks on the bottom layer.
   Figure~\ref{fig:clause} depicts all these connections.

   \begin{figure}[htb]
     \centering
     \includegraphics[width = \textwidth]{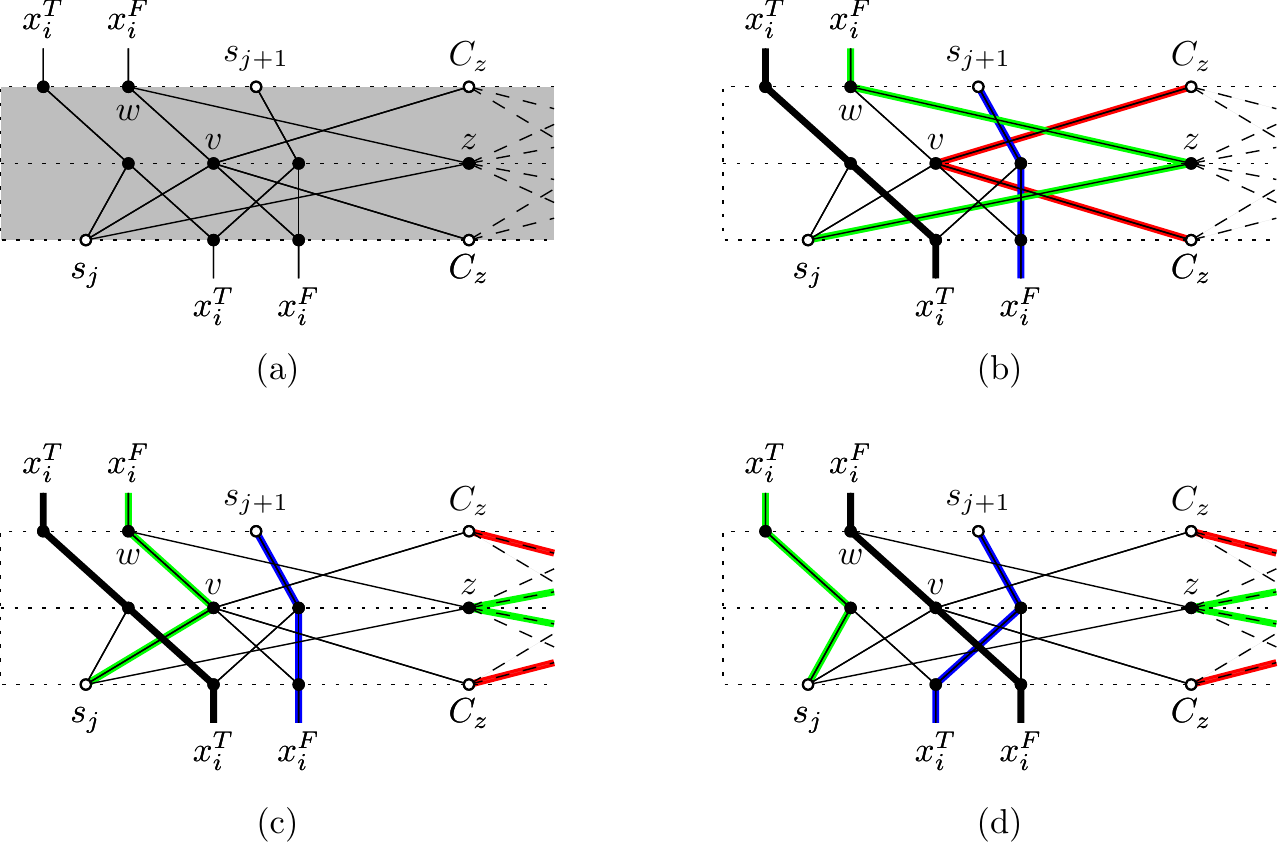}
     \caption{(a) Gadget for a clause $C_z$ containing a variable
       $x_i$ as a positive literal.
The clause involves two other variables, whose connections are
indicated by dashed lines. (b--d) The possibilities of paths passing
through the gadget:
 (b)~$x_i=\textit{true}$, the clause is fulfilled,
and the clause path makes its detour via the vertex $v$.
 (c)~$x_i=\textit{true}$, the clause is fulfilled,
and the clause path makes its detour via another vertex.
 (d)~$x_i=\textit{false}$, this variable does not contribute to
 fulfilling the clause,
and the clause path \emph{has to} make its detour via another vertex.
For a negative literal, the detour vertex $v$
and the upper neighbor $w$ of $z$
would be placed on the other track, $x_i^T$.
}
     \label{fig:clause}
   \end{figure}

We can make the following observations about the interaction between
the variable $x_i$ and the clause~$C_z$. We, further, illustrate those in 
Figure~\ref{fig:clause}.
\begin{enumerate}
\item 
A variable path (shown in black) that enters on the track $x_i^T$ or $x_i^F$ must leave
the gadget on the same track. Equivalently, a path cannot change track except in the 
crossing gadgets. 
\item
\label{done}
 The clause path (in red) can make a detour through vertex $v$ (w.r.t. Figure~\ref{fig:clause}) only if the
  variable $x_i$ makes the clause true, according to the track chosen by the variable path.
 In this case, the supplementary
  path $s_j$ covers the intermediate vertex $z$ of the clause.
\item 
If variable $x_i$ makes the clause true, the clause path may
also choose a different detour, in case more variables make the
clause true.
\item The supplementary path $s_j$ (shown in green) can reach its sink vertex.
\item The supplementary path $s_{j+1}$ (in blue) can reach the track
the track $x_i^T$ or $x_i^F$ which is not used by the variable path.
\end{enumerate}

Since each clause path \emph{must} make a detour into one of the variables,
it follows from Property~\ref{done}
that a set of disjoint source-sink paths exists if and only
if all clauses are satisfiable.

The special properties of the graph that are required for
the {Disjoint Paths Problem on a Directed Acyclic Graph (DP)} can be
checked easily. 
Whenever a vertex has 
one or more incoming arcs,
they come from the
previous layer of the same clause gadget.
We can assign three distinct layers to each clause gadget and to each
crossing gadget. 
Thereby, we ensure that every layer contains at most 10 vertices.
It follows from the construction that the graph has just enough
vertices
that the shortest source-sink paths can be disjointly packed, but we
can also check this explicitly.
If there are $n$ variables and $m$ clauses,
there are $30m+12n$ vertices in $G$:
30 per clause gadget, plus 12 for the two crossing gadgets of  each variable. 
For a variable $x_i$ that appears
in $\ell_i$ clauses, we have $n(x_i)=3(\ell_i+2)$. Each clause path $C_z$
contains $n(C_z)=3$ vertices, and each of the
 $\sum (\ell_i+1)=3m+n$ supplementary paths $s_j$ has
length
$n(s_j)=6$.
This gives  in total
$\sum n(v)
=3\sum \ell_i+6n+ 3m + 6(3m+n)
=9m+6n+ 3m + 18m+6n = 30m+12n=|V|$, as required.
 \end{proof}



\section{Reduction to Colored Token Swapping}\label{sec:RedCloredTokens}
We have shown in Lemma~\ref{lem:DPhard} how to reduce $3$SAT to the
{Disjoint Paths Problem on a Directed Acyclic Graph (DP)} in linear space. Now, we show how to reduce from this problem to the colored token swapping problem.

\begin{lemma}\label{lem:coloredhard}
	There exists a linear reduction from DP to the colored token swapping problem.
\end{lemma}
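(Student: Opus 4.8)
The plan is to construct, from a given DP instance $G=(V,E)$ with bijection $\varphi\colon V^- \to V^+$, a colored token swapping instance in which the sum of graph distances $L^*$ (the matching lower bound from Lemma~\ref{lem:lower-bound}) is exactly achievable if and only if the disjoint paths exist. The natural idea is to turn the DAG $G$ into an undirected graph $G'$ by forgetting orientations, and to place tokens and vertex-colors so that each token must travel \emph{along} a directed source-to-sink path of $G$. Concretely, I would assign to each source-sink pair $v, \varphi(v)$ its own color $c_v$: put a token of color $c_v$ on the source $v$ and make $\varphi(v)$ the unique vertex of color $c_v$, so that token must reach $\varphi(v)$. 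The remaining (intermediate) vertices get a common ``filler'' color, with one filler token per such vertex, so the filler tokens only need to be permuted among the intermediate vertices.

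The key computation is the lower bound $L^*$. For the source-sink token, the graph distance from $v$ to $\varphi(v)$ is $n(v)-1$ (using Property~3, that the shortest path has $n(v)$ vertices), so the matching cost for these tokens is $\sum_{v\in V^-}(n(v)-1)$. I would argue the filler tokens contribute a predictable amount as well: since by Property~4 the vertex count is exactly $\sum n(v)$, a disjoint packing of shortest paths covers every vertex, and one can choose the filler matching so that each filler token moves one step along its covering path. The heart of the reduction is then the equivalence: the optimal swap count equals this minimum $L^*$ \emph{exactly} precisely when vertex-disjoint shortest source-sink paths exist. The forward direction is the clean one — given disjoint paths, I would realize $L^*$ by a sequence of happy swap chains, one per path (as in Section~\ref{sec:Approx1}), each chain shifting the source token to its sink while cyclically advancing the fillers, so that \emph{every} swap decreases $L$ by exactly $2$ and the total is $L^*$. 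Here the acyclicity and layering of $G$ guarantee these chains are genuine paths and can be scheduled consistently.

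The main obstacle, and the step I expect to require the most care, is the converse: showing that if the paths are \emph{not} vertex-disjoint then \emph{every} solution needs strictly more than $L^*$ swaps. Since $L^*$ is the matching lower bound, a solution of cost exactly $L^*$ forces every single swap to reduce $L$ by the full amount $2$ and forces every token to travel along a \emph{shortest} path to its target. I would argue that such a ``perfectly efficient'' solution induces, for each source token, a shortest directed path from $v$ to $\varphi(v)$, and that two tokens occupying the same vertex at any moment would force a wasted (distance-non-decreasing) swap for at least one of them — contradicting perfect efficiency. This is essentially the statement that a cost-$L^*$ schedule yields a vertex-disjoint path packing, so disjoint paths exist. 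Making this argument watertight requires tracking that no token ever detours or waits, which is where the strict DP properties (the layer structure, the exact vertex count, and the fact that \emph{every} source-sink path has the same length $n(v)$) do the work of ruling out any slack.

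Finally I would verify linearity: $G'$ has $|V|=30m+12n$ vertices and $O(|V|)$ edges, each color class has constant size (bounded by the at-most-10 vertices per layer and the path structure), and the token/color assignment is computable in linear time, so the reduction is linear as claimed. I would close by noting that combining this with Lemma~\ref{lem:DPhard} chains a linear reduction from 3SAT to colored token swapping, which is the input needed for the subsequent ETH and APX-hardness arguments.
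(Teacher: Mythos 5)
Your construction has a genuine flaw in the choice of colors, and it breaks the backward direction of the reduction. By giving all intermediate vertices a single common filler color, you make the filler tokens essentially free: each filler token already sits on a vertex of its own color, so it contributes nothing to the matching bound and, more importantly, nothing penalizes shuffling fillers arbitrarily among intermediate vertices. Consequently a ``dishonest'' solution can route two source tokens through a shared vertex (at different times), or route a source token along an arbitrary undirected shortest path ignoring the arc orientations, at no extra cost: every swap still advances one source token by one edge while merely permuting fillers among filler-colored vertices, so the total swap count is $\sum_v (n(v)-1)$ whether or not a vertex-disjoint directed packing exists. Your attempted rescue via the matching lower bound does not work either: Lemma~\ref{lem:lower-bound} only gives $L^*/2$ as a lower bound, while the schedule you describe uses $L^*$ swaps (each swap reduces $L$ by $1$, not $2$, since the filler partner gains nothing), so ``cost exactly $L^*$'' does not force every swap to be fully productive, and the claim that overlapping trajectories cost strictly more than $L^*$ is simply false in your instance.

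The paper's construction avoids this by coloring the fillers \emph{per layer, shifted by one}: the filler token placed on a non-source vertex $v$ gets the color of the layer $L_v$ from which all of $v$'s incoming arcs originate, so every filler token is forced to move up by exactly one layer. The threshold is then set to $T=|V|-k$, the number of filler tokens, via a counting argument rather than the distance bound: each filler needs at least one swap, two fillers can never usefully swap with each other (one of them would move away from its unique target layer and exhaust its budget), so in a $T$-swap solution every swap pairs one filler moving up with one source token moving down along an arc of $G$. This is exactly what forces the source tokens to trace vertex-disjoint \emph{directed} paths that partition $V$. That shifted layer-coloring is the missing idea in your proposal; without it, or some equivalent mechanism that charges a strictly positive cost to every deviation from a disjoint directed packing, the ``only if'' direction cannot be made to work.
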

\begin{proof}
  Let $G$ be a directed graph and $\varphi$ be a bijection, as in the
  definition of DP.  We place $k$ tokens $t_1,\ldots,t_k$ of distinct
  colors on the vertices in $V^-$, see Figure~\ref{fig:ColorReduction}
  for an illustration. Their target positions are in $V^+$ as
  determined by the assignment $\varphi$.  We define a color for each
  layer $V_1,\ldots, V_{t-1}$.  Each vertex in layer $V_i$ which is
  not a sink is colored by the corresponding color. 
%
  Recall that for each vertex $v$ all ingoing edges come 
  from the same layer, which we denote by $L_v$.
  On each vertex $v$
  which is not a source, we place a token with the color of layer
  $L_v$. We call these tokens \emph{filler tokens}.
%
%
We set the threshold $T$ to 
 $|V(G)| - k$. This equals the number of filler tokens. 
	\begin{figure}[htbp]%
		\centering
		\includegraphics{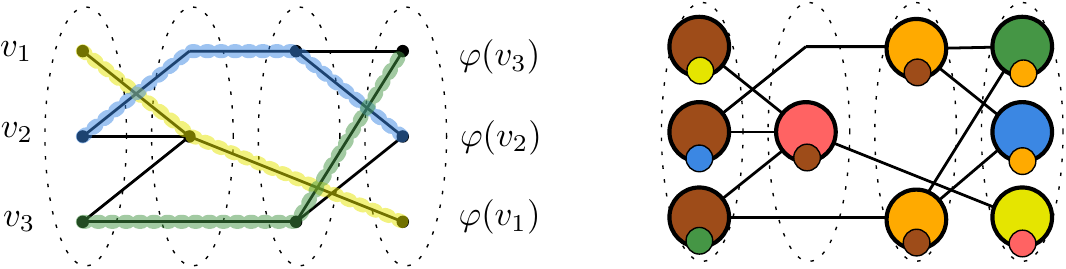}%
		\caption{Left: an instance of the disjoint paths
                  problem with $t=4$ layers
and $k=3$ source-sink pairs,
 together with a solution; arcs are directed from left to right.
Right: an equivalent instance of the colored token swapping problem.}%
		\label{fig:ColorReduction}%
\end{figure}
	
We have to show that there are $k$ paths with the properties above if and only if there is a sequence of at most $T$ swaps that brings every token to its target position.
	
[$\Rightarrow$] Given those paths, swap tokens $t_1,\ldots,t_k$ along their respective paths. In this way every token gets to its target vertex. (Recall that the paths $P_1,\ldots,P_k$ partition the set of vertices.)
	
[$\Leftarrow$] Assume we can swap every token to its target position
in $T$ swaps. Since each of the $T$ filler tokens must be swapped at
least once, we conclude that every filler token swaps exactly once.
In particular, every swap must swap a filler token
 and a non-filler token,
 which is a token that starts on a vertex in $V_1$,
and the filler token moves to a lower layer and the non-filler token
to a higher layer.

We denote by $P_1,\ldots,P_k$ the paths that the non-filler tokens
$t_1,\ldots,t_k$ follow. 
By the above argument,
 $P_1,\ldots,P_k$ must partition the vertex set.
The paths start and end at the correct position, because the tokens $t_1,\ldots,t_k$ do.
\end{proof}

We conclude that the Colored Token Swapping Problem is NP-hard. This
has already been proved by 
Yamanaka
, Horiyama, Kirkpatrick, Otachi, Saitoh, Uehara, and Uno
\cite{Yamakana-2015}, even when there are only three colors.

The reduction of Lemma~\ref{lem:coloredhard} produces instances of the colored token swapping problem with additional properties, which are directly derived from the properties of DP. For later usage, we summarize them in the following definition.

\begin{quote}
   \noindent \textbf{Structured Colored Token Swapping Problem}
 \\  \textbf{Input:} A number $k\in \mathbb{N}$ and a graph $G=(V,E)$, where each vertex has a not necessarily unique color. Further on each vertex sits a colored token. 
\begin{enumerate}
 \item There exists a partition of the vertices $V$ into layers $V_1,\ldots,V_t$ such that each layer has at most 10 vertices.
 \item Each edge can be oriented so that all outgoing edges of a vertex go to layers with larger index. 
 \item There exists a bijection $\varphi$ between the sources $V^-\subseteq V$ and the sinks $V^+\subseteq V$, such that for each vertex $v\in V^-$ the token on $v$  has target vertex $\varphi(v)\in V^+$.
 \item All non-sink vertices of layer $V_i\setminus V^+$ have the same color, for all $i=1,\ldots, t$.
 \item All non-source vertices $v\in V_i\setminus V^-$ have all ingoing edges to the same layer $V_j$ with $j<i$.
\end{enumerate}
\textbf{Question:} Does there exist a sequence of $k$ swaps such that each token is on a vertex with the same color?
\end{quote}

\begin{corollary}\label{cor:StrCTSW}
  There exists a linear reduction from DP to the structured colored token swapping problem.
\end{corollary}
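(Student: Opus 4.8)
The plan is to observe that the reduction of Lemma~\ref{lem:coloredhard} already outputs instances of exactly the required structure, so that no new construction is needed and the linear size bound is inherited for free. I would take the colored token swapping instance produced in that proof from a DP instance $(G,\varphi)$ verbatim, and then verify the five defining properties of the Structured Colored Token Swapping Problem in turn.

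First I would read off Properties~1 and~2 directly from the layer structure of DP: the partition $V_1,\dots,V_t$ with at most $10$ vertices per layer is Property~\ref{layers} of DP together with the size bound, and the orientation in which all outgoing arcs lead to higher-indexed layers is simply the arc orientation of the directed acyclic graph~$G$. Property~3 is immediate from the construction, since we placed the distinctly colored tokens $t_1,\dots,t_k$ on $V^-$ with targets on $V^+$ prescribed by~$\varphi$, so the required bijection is exactly~$\varphi$. The remaining two properties concern the coloring: for Property~4 I would recall that Lemma~\ref{lem:coloredhard} assigns one color per layer and colors every non-sink vertex by its layer's color, so all vertices of $V_i\setminus V^+$ share a color; and Property~5 is a verbatim restatement of the DP condition that every non-source vertex has all its incoming arcs from a single earlier layer~$L_v$, which is precisely the condition used there to color the filler tokens.

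There is essentially no obstacle here: the corollary is a bookkeeping consequence of the proof of Lemma~\ref{lem:coloredhard}, and the reduction remains linear because that reduction is. The only point that needs a moment's care is the treatment of the source/sink exceptions---sinks carry no layer color and instead receive their tokens through $\varphi$, while sources carry no filler token---which is exactly why Properties~4 and~5 are phrased in terms of $V_i\setminus V^+$ and $V_i\setminus V^-$ respectively, matching the exceptions in the token placement.
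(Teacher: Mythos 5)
Your proposal is correct and matches the paper's own (implicit) argument: the paper states the corollary precisely because the reduction of Lemma~\ref{lem:coloredhard} already outputs instances satisfying the five structural properties, which are inherited directly from the DP definition. Your property-by-property verification, including the observation about the $V_i\setminus V^+$ and $V_i\setminus V^-$ exceptions, is exactly the bookkeeping the paper leaves to the reader.
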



\section{Reduction to the Token Swapping Problem}\label{sec:ReductionTSP}
In this section we describe the final reduction which results in an
instance of the token swapping problem. To achieve this we make use of
the even permutation network gadget from
Section~\ref{sec:permutation}.

\begin{lemma}\label{lem:reductionTSP}
	There exists a linear reduction from the structured colored token swapping problem  to the token swapping problem.
\end{lemma}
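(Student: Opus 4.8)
The plan is to turn each color constraint into a fixed-target constraint by gluing an even permutation network (Lemma~\ref{lem:PermutationNetworks}) onto every color class. First I would \emph{double} the instance: take two disjoint copies $H_1,H_2$ of the structured colored instance, with identical colorings, tokens, and bijection $\varphi$. For each color $c$, let $S_c$ be the set of all color-$c$ vertices across both copies; since every layer has at most $10$ vertices and each color lives in a single layer, $|S_c|\le 20$ is a constant. I would attach a network $\mathrm{PN}_c$ whose input set $V_{\text{in}}$ is identified with $S_c$, adding fresh outputs $V_{\text{out}}^c$ and $O(|S_c|^3)=O(1)$ inner vertices. The doubled graph $H=H_1\cup H_2$ plays the role of the graph $H$ in the extension clause of Lemma~\ref{lem:PermutationNetworks}, meeting each $\mathrm{PN}_c$ only in $V_{\text{in}}=S_c$. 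In the resulting \emph{uncolored} instance every token becomes distinct: the filler tokens of $H$ and of the networks keep their prescribed targets, while each token that had color $c$ must now reach a distinct vertex of $V_{\text{out}}^c$ (the token-to-output assignment being fixed arbitrarily at construction time). Finally I would set the target number of swaps to $K:=2k+\sum_c T_c$, where $T_c$ is the value $T$ of Lemma~\ref{lem:PermutationNetworks} for $\mathrm{PN}_c$. Since the number of colors is linear in the instance size and each network has constant size, the construction is linear.

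For the direction ``colored solvable with $k$ swaps $\Rightarrow$ uncolored solvable with $K$ swaps'', I would take an optimal colored assignment $\sigma$ and run the corresponding $k$-swap sequence \emph{inside each copy} of $H$, for a total of $2k$ swaps; this brings every color-$c$ token onto some vertex of $S_c=V_{\text{in}}$. Here is the point of doubling: because I use the \emph{same} $\sigma$ in both copies, the permutation induced on the inputs of $\mathrm{PN}_c$ is block-diagonal, so its sign is $\operatorname{sgn}(\sigma_c)^2=+1$, hence even regardless of the parity of $\sigma$ itself. Choosing the fixed output targets consistently with the network's reference (downward) routing then makes the permutation each network must realize even, so by Lemma~\ref{lem:PermutationNetworks} each $\mathrm{PN}_c$ can finish routing its tokens to $V_{\text{out}}^c$ at cost exactly $T_c$. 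Summing gives $2k+\sum_c T_c=K$ swaps.

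The harder direction is ``uncolored solvable with $K$ swaps $\Rightarrow$ colored solvable with $k$ swaps'', where I would invoke the full strength of the extension clause. Given a solution of the uncolored instance, I would split its swaps into those on network edges and those on edges of $H$. The extension clause guarantees that, however the color-$c$ tokens travel through $H$ to reach $V_{\text{in}}$, the swaps inside $\mathrm{PN}_c$ cost at least $T_c$, with equality only when the tokens enter through $V_{\text{in}}$ and realize an \emph{even} permutation onto $V_{\text{out}}^c$; any ``crazy'' routing, any odd permutation, or any token failing to end in $V_{\text{out}}^c$ costs at least $T_c+1$. Hence the network swaps cost at least $\sum_c T_c$, leaving at most $2k$ swaps for $H$. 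Because $H_1$ and $H_2$ are vertex-disjoint off the shared inputs, these $H$-swaps decompose per copy and, restricted to each copy, form a valid colored solution there; so one of the two copies is solved with at most $k$ swaps, and since the copies are identical to the original this yields a colored solution with at most $k$ swaps.

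The main obstacle I anticipate is exactly this backward accounting: I must ensure the swap separation is legitimate, that tokens cannot profit by shuttling between the two copies through the shared inputs $V_{\text{in}}$ (ruled out by the strict bound $\ge T_c+1$ for any non-even or non-entering behavior in the extension clause), and that the per-copy split of the $H$-swaps genuinely produces a feasible colored assignment. The parity bookkeeping in the forward direction---verifying that the combined input permutation is always even so that even permutation networks really suffice---is the other point needing care, but it is handled cleanly by the doubling trick, since the product of two permutations of the same parity is even.
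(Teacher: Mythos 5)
Your proposal is correct and takes essentially the same approach as the paper: double the instance so that the permutation induced on each network's inputs is forced to be even, attach one constant-size even permutation network per color class, give the layer-colored tokens fixed targets on the network outputs, set the budget to $2k$ plus the networks' fixed costs $\sum_c T_c$, and use the extension clause of Lemma~\ref{lem:PermutationNetworks} plus a per-copy pigeonhole for the backward direction. The only cosmetic difference is that you index the networks by color rather than by layer, which coincides under the structured problem's layer-coloring property.
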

\begin{proof}
  Let $I$ be an instance of the structured colored token swapping instance. We denote the graph by $G$, the layers by $V_1,\ldots,V_t$, the sources by $V^-$, the sinks by $V^+$ and the threshold for the number of swaps by $k$. 

  We construct an instance $J$ of the token swapping problem. The graph $\overline{G}$ consists of two copies of $G$. For each set $V_j\setminus V^+$, we add \emph{one} permutation network to the union of both copies. In other words, the two copies of $V_j\setminus V^+$ serve as inputs of the permutation network. We denote the output vertices of the permutation network attached to the copies of $V_j\setminus V^+$ by $V_j'$.
  The filler tokens that were destined for $V_j\setminus V^+$ have $V_j'$ as their new final destination, see Figure~\ref{fig:attachPermuationNetwork}.
  It is not important how we assign each token to a target vertex, as long as this assignment corresponds to an \emph{even} permutation. 
  \begin{figure}[htbp]
  \centering
  \includegraphics{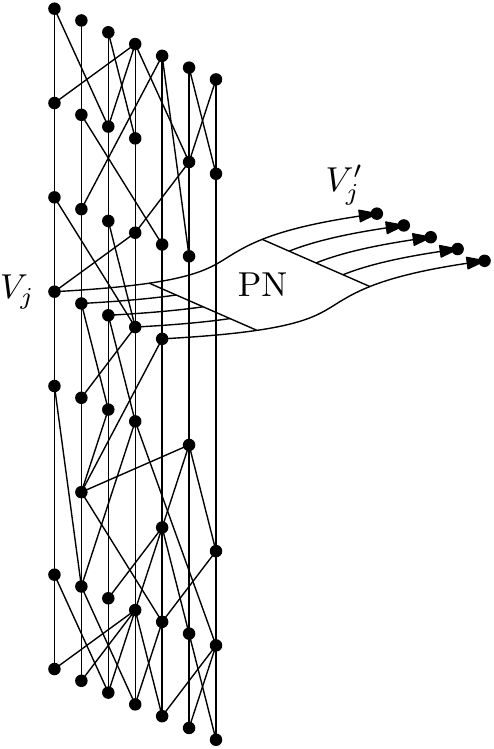}
  \caption{Attaching a permutation network to let the filler tokens
    destined for $V_{j}$ arrive at their final destinations $V_j'$.
The small ``vertical'' network in this figure represents the network constructed in Lemma~\ref{lem:coloredhard}.
Note that we did not duplicate $G$ in the figure.}
  \label{fig:attachPermuationNetwork}
\end{figure}
  Further each token gets a unique label. The threshold $\overline{k}$ for the number of swaps is defined by $2k$ plus the number of swaps needed for the permutation networks.
  
  We show at first that this reduction is linear. For this it is sufficient to observe that the size of each layer is constant. And thus also the permutation network attached to these layers have constant size each.
  
  Now we show correctness. Let $I$ be an instance of the structured colored token swapping problem and $J$ the constructed instance as described above.
  
  $[\Rightarrow]$ Let $S$ be a valid sequence of swaps that brings every token  on $G$ to a correct target position within $k$ swaps. We need to show that there is a sequence of $\overline{k}$ swaps that brings each token in $\overline{G}$ to its unique target position.  We perform $S$ on each copy of $G$. Thereafter every token is swapped through the permutation network to its target position. Note that the permutation between the input and output is an \emph{even} permutation. This is because we doubled the graph $G$. Therefore, the number of swaps in the permutation network is constant regardless of the permutation of the filler tokens in layer $V_i$, by Lemma~\ref{lem:PermutationNetworks}. This also implies that the total number of swaps is $\overline{k}$.
  
  $[\Leftarrow]$ Assume that there is a sequence $S'$ of $\overline{k}$ swaps that brings each token of $J$ to its target position. This implies that each filler token went through the permutation network to its correct target vertex. In order to do that each filler token must have gone to some input vertex of its corresponding permutation network. As the number of swaps inside each permutation network is independent of the permutation of the tokens on the input vertices, there remain exactly $2k$ swaps to put all tokens in each copy of $G$ at its right place. This implies that each token in $G$ can be swapped to its correct position in $I$ in $k$ swaps as claimed.
\end{proof}

\section{Hardness of the Token Swapping Problem}\label{sec:Finish}
In this section we put together all the reductions. They imply the following theorem.
\LowerBounds*
\begin{proof}
For the NP-completeness, we reduce from 3SAT. 
For the lower bound under ETH, we need to use the Sparsification Lemma, see~\cite{impagliazzo1999complexity}, and reduce from  3SAT instances where the number of clauses is linear in the number of variables. This prevents a potential quadratic blow up of the construction.
For the inapproximability result, we reduce from $5$-OCCURRENCE-MAX-$3$-SAT. (In this variant of $3$SAT each variable is allowed to have at most \emph{$5$ occurrences}.)
This gives us some additional structure that we use for the argument later on. In all three cases the reduction is exactly the same. 

Let $f$ be a 3SAT instance. We denote by $K(f)$ the instance of the token swapping problem after applying the reduction of Lemma~\ref{lem:DPhard}, Lemma~\ref{lem:coloredhard} and Lemma~\ref{lem:reductionTSP} in this order.
(It is easy to see that the graph of $K(f)$ has bounded degree.)

As all three reductions are correct, we can immediately conclude that the problem is NP-hard. NP-membership follows easily from the fact that a valid sequence of swaps is at most quadratic in the size of the input and can be checked in polynomial time.

The Exponential Time Hypothesis (ETH) asserts that 3SAT cannot be solved in $2^{o(n')}$, where $n'$ is the number of variables. The Sparsification Lemma implies that 3SAT cannot be solved in $2^{o(m')}$, where $m'$ is the number of clauses. As the reductions are linear the number of vertices in $K(f)$ is linear in the number of clauses of $f$. Thus a subexponential-time algorithm for the token swapping problem implies a subexponential-time algorithm for 3SAT and contradicts ETH.

To show APX-hardness, we do the same reductions as before, but we reduce from $5$-OCCURRENCE-MAX-$3$-SAT. Thus we can assume that each variable in $f$ appears in at most $5$ clauses. This variant of 3SAT is also APX-hard, see~\cite{arora1996hardness}. Assume a constant fraction of the clauses of $f$ are not satisfiable. We have to show that we need an additional constant fraction on the total number of swaps. For this, we assume that the reader is familiar with the proofs of Lemma~\ref{lem:DPhard},~\ref{lem:coloredhard} and~\ref{lem:reductionTSP}.
It follows from these proofs that there is a constant sized gadget in $K(f)$ for  each clause of $f$. Also there are certain tokens that represent variables and the paths they take correspond to a variable assignment.
We denote with $x,y,z$ the variables of some clause $C$, and we denote with $T_x,T_y,T_z$ the tokens corresponding to these variables and $G_C$ the gadget corresponding to $C$.
We need two crucial observations.
In case that the paths that the tokens $T_x, T_y$ and $T_z$ take do not correspond to an assignment that makes $C$ true, at least one more swap is needed that can be attributed to this clause gadget $G_C$.
In case that a token $T_x$ changes its track, which corresponds to another assignment of the variable, then at least one more swap needs to be performed that can be attributed to all its clauses with value $1/5$.
These two observations follow from the proofs of the above lemmas, as otherwise it would be possible to ``cheat'' at each clause gadget and the above lemmas would be incorrect.
The observations also imply the claim. Let $f$ be a 3SAT formula with a constant fraction of the clauses not satisfiable. Assume at first that the swaps are `honest' in the sense that the variable tokens $T_x$ does not change its track and corresponds consistently with the same assignment.
In this case, by the first observation, we need at least one extra swaps per clause. And thus a constant fraction of extra swaps, compared to the total number of swaps.
In a dishonest sequence of swaps, changing the 
track of some variable token $T_x$ fixes at most five clauses.
This implies at least one extra swap for every five unsatisfied clauses, which is a constant fraction of all the swaps as the total number of swaps is linear in the number of clauses of $f$. This finishes the APX-hardness proof.
\end{proof}

\paragraph*{Acknowledgments}
This project was initiated on the GREMO 2014 workshop in Switzerland. We want to thank the organizers for inviting us to the very enjoyable workshop.
Tillmann Miltzow is partially supported by the ERC grant PARAMTIGHT: ``Parameterized complexity and the search for tight complexity results'', no. 280152.
Yoshio Okamoto is partially supported by MEXT/JSPS KAKENHI Grant Numbers 24106005, 24700008, 24220003 and 15K00009, and JST, CREST, Foundation of Innovative Algorithms for Big Data.


%
\bibliographystyle{plainurl} 
\bibliography{Lib}
\goodbreak
\end{document}